\newcommand{\p}{\partial}
\newcommand{\dd}{{\rm d}}
\begin{document}
%

\title{Light cones in Finsler spacetime}


\author{E. Minguzzi}
\institute{Dipartimento di Matematica e Informatica ``U. Dini'', Universit\`a degli
Studi di Firenze,  Via S. Marta 3,  I-50139 Firenze, Italy \\
\email{ettore.minguzzi@unifi.it} }
\authorrunning{E. Minguzzi}

\date{}
\maketitle

\begin{abstract}
\noindent  Some foundational results on the geometry of Lorentz-Minkowski spaces and
Finsler spacetimes are obtained. We prove that the local light cone structure of a  reversible Finsler spacetime with  more than two dimensions is topologically the same as that of  Lorentzian spacetimes: at each point we have just two strictly convex causal cones which intersect only at the origin. Moreover, we  prove a reverse Cauchy-Schwarz inequality for these spaces and a corresponding reverse triangle inequality. The Legendre map is proved to be a diffeomorphism in the general pseudo-Finsler case provided the dimension is larger than two.
\end{abstract}

\section{Introduction}

While the Finsler generalization of Riemannian geometry has proved natural and successful, and is by now classical \cite{matsumoto86,bao00,szilasi14},
the Finslerian generalizations of Einstein's gravity and  Lorentzian geometry have met some important difficulties so far.


In order to fix the ideas let us consider a Finsler Lagrangian $L(x,v)$ which is not quadratic in the velocities but which has, in some domain, a Lorentzian Hessian with respect to the velocities. A choice could be given by the Finsler Lagrangian of Randers type
\begin{equation} \label{mmj}
L(x,v)= \frac{1}{2} [a(v_0^{2}-v_1^{2}-v_2^{2}-v_3^{2})^{1/2}+ b v_1]^2, \qquad a,b > 0.
\end{equation}
Unfortunately, this Lagrangian has some undesirable features since the Finsler metric $g_{\alpha \beta}=\p^2 L/\p v^\alpha \p v^\beta $ is not  defined outside the cone $v_0^{2} > v_1^{2}+v_2^{2}+v_3^{2}$.






In the literature there is no  consensus on how to introduce a notion of
Finsler spacetime.  We can roughly identify two approaches depending on whether a Finsler Lagrangian such as (\ref{mmj}) would be considered mathematically acceptable (although possibly not a solution to the field equations).


The first approach, which would accept (\ref{mmj}) as well posed, dates back to Asanov
who considered in his monograph \cite{asanov85} several mathematical
and physical aspects of a Finslerian generalization of Einstein's gravity.

Asanov worked with a
positive Finsler Lagrangian  defined only over a conical
subset $I^+ \subset TM$  interpreted as the subbundle of
future directed timelike vectors. Unfortunately, in Asanov's
approach there is no clear room for lightlike curves and lightlike
geodesics. Given the considerable development of causality theory
for Lorentzian geometry \cite{hawking73} and its many insights for
the global causal structure of spacetime this is certainly a
serious drawback of Asanov's definition.

In Finsler geometry  the metric is defined on the tangent bundle minus the zero section, that is, on the {\em slit tangent bundle} $TM\backslash0$. In the indefinite case, Asanov's idea of removing a portion of the tangent bundle,
so as to enlarge the family of   allowed Finsler Lagrangians
has been taken up by several authors, and is probably the most common approach which can be found in physical works. With reference to recent literature, Barletta and Dragomir \cite{barletta12} select a general open conic subbundle, Javaloyes and S\'anchez \cite{javaloyes13} do the same but impose some conditions in order to retain the null directions, L\"ammerzahl, Perlick and Hasse \cite{lammerzahl12} consider metrics defined almost everywhere so as to comprise in their framework a kind  of direct sum of Finsler metrics (Remark \ref{mxo}), while Pfeifer and Wohlfarth \cite{pfeifer11}  use a  more complex definition which involves Finsler Lagrangians positive homogeneous of arbitrary degree and the restriction to suitable subsets of the tangent bundle. Kostelecky \cite{kostelecky11} also removes some portion of the tangent bundle, apart from the zero section, where the Finsler Lagrangian or the metric become singular.

In these Asanov's type approaches the cone domain of $L$ is usually added  as an ingredient of the very definition of Finsler spacetime.



A  more restrictive definition of Lorentz-Finsler manifold was given by
Beem, and is essentially that adopted by mathematicians working in pseudo-Finsler geometry  \cite{beem70,beem76b,akbarzadeh88}. Here the Finsler Lagrangian is defined all over $TM\backslash 0$. This is the definition which we adopt in this work but we stress that some of our results, including the Finslerian reverse Cauchy-Schwarz inequality and the reverse triangle inequality, will hold as well for the above approaches, since the proof works on a conic subbundle.

Beem's definition is included in the next definitions.\footnote{We recall that $f\in C^{k,1}$ iff $f$ is continuously differentiable up to order $k$, with Lipschitz $k$-th derivatives. }

\begin{definition} \label{nod}
 A {\em
pseudo-Finsler manifold} $(M,L)$ is a pair given by a paracompact connected
$C^{3,1}$ manifold and a $C^{3,1}$  function
$L\colon TM\backslash 0 \to \mathbb{R}$, $(x,v)\mapsto L(x,v)$,
 positively homogeneous of second degree with respect to  the fiber
\[
L(x,sv)=s^2L(x,v), \quad \forall s>0,
\]
such
that the symmetric tensor $ g:
TM\backslash 0 \to T^*M\otimes_{M} T^*M$,
defined by\footnote{Here  $(x^\mu,v^\mu)$  are local canonical coordinates on $TM$ induced by a local coordinate system on $M$. Clearly, the definition is well posed, that is, independent of the coordinate system.}
\[
g=g_{ \mu \nu}\, \dd x^\mu \otimes \dd x^\mu=\frac{\p^2 L}{\p
v^\mu\p v^\nu} \,\dd x^\mu \otimes \dd x^\mu,
\]
is non-degenerate. The pseudo-Finsler manifold is called {\em reversible} if
$L{(x,v)}=L{(x,-v)}$. It is simply a  {\em Finsler manifold} if $g$ is positive definite and, finally, it is a {\em Lorentz-Finsler manifold} if $g$ has Lorentzian signature, i.e.\ $(-,$ $+, \cdots,+)$.

A pseudo-Finsler space for which $g$ does not depend on $x$ is a {\em pseudo-Minkowski space}  (a {\em Minkowski space} if $g$ is positive definite). A {\em Lorentz-Minkowski space} is a pseudo-Minkowski space for which $g$ has Lorentzian signature  $(-,+,\cdots,+)$. A pseudo-Finsler space for which $g$ does not depend on $v$ is a pseudo-Riemannian space. If $g$ does not depend on neither $x$ nor $v$ then it is a {\em pseudo-Euclidean space} (a {\em Euclidean space} in the positive definite case, and a {\em Minkowski spacetime} in the Lorentzian signature case).
%
\end{definition}
%
%
%
%

Sometimes we shall write $g_{(x,v)}$ in order to stress the dependence  on the base point $x$ and the fiber point $v\in T_xM\backslash 0$. We shall also write $g_v$ either in order to
shorten the notation or because we regard $v$ as an element of
$TM\backslash 0$.

\begin{remark}
From  the positive homogeneity condition on  $L$ it follows that $g$ is positive homogeneous of degree zero and satisfies
\begin{equation} \label{jui}
(a) \quad \frac{\p g_{(x,v)\, \mu \nu }}{\p v^\alpha} \, v^\nu=0,
\qquad (b) \quad \frac{\p g_{(x,v)\, \mu \nu }}{\p v^\alpha} \,
v^\alpha=0.
\end{equation}
Furthermore,
\begin{align}
\frac{\p L}{\p v^\mu}(x,v)&=g_{(x,v)\, \mu \nu } \,v^\nu,
\label{njr}\\
L(x,v) &=\frac{1}{2}\, g_{(x,v)\, \mu \nu} \,v^\mu v^\nu. \label{kip}
\end{align}
One could have equivalently defined a pseudo-Finsler space  as a pair $(M,g)$ where on $g$ are imposed the properties expressed by Eq.\ (\ref{jui}). In this approach the Finsler Lagrangian $L$ would be defined  through Equation (\ref{kip}). For this reason, with some abuse of notation, but  in order to be consistent with the notation for pseudo-Riemannian manifolds, we shall equivalently denote the pseudo-Finsler spaces  with $(M,g)$.
\end{remark}

%


\begin{remark} \label{ret}
The differentiability conditions on $L$ imply that $g$ is $C^{1,1}$.
Observe that $g$ is positive homogeneous of zero degree on the velocities and bounded on compact sets of the unit tangent bundle (built using an arbitrary Riemannian metric).  As a consequence, although $g$ cannot be extended to the zero section, $L$ and $\p L/\p v^\mu$ can be extended by setting: $L=0$,  $\p L/\p v^\mu=0$. With this definition $L\colon TM\to \mathbb{R}$ is $C^{1,1}$. A well known simple argument which uses positive homogeneity shows that $L$ is $C^2$ on the zero section if and only if $L$ is quadratic in the velocities (pseudo-Riemannian case).

By the Picard-Lindel\"of theorem the $C^{1,1}$ differentiability condition on $g$ is the minimal one which  guarantees the existence and uniqueness of geodesics \cite{minguzzi13d}.
The geodesic spray,  being constructed with terms of the form
$\p g_{v\, \alpha \beta}/\p x^\gamma v^\delta v^\eta$, is Lipschitz in $x$ and $C^{1,1}$ in the velocities (also on the zero section where it can be defined to vanish).
\end{remark}


According to Beem a Finsler spacetime  is a time oriented Lorentz-Finsler manifold, where the notion of time orientation will be introduced below. The Lagrangian  (\ref{mmj}) does not satisfy Beem's definition. Thus a first question is whether there are Finsler Lagrangians which satisfy Beem's definition but which are not quadratic. An affirmative answer is provided by the next example.

\begin{example} \label{exe}
Let us consider the reversible Finsler Lagrangian ($\alpha \ge 0$)
\begin{align}
L(x,v)&= \frac{1}{2} \left\{1-\alpha e^{-\frac{v_0^2}{v_1^2+v_2^2}-\frac{v_1^2+v_2^2}{v_0^2}} \right\} (-v_0^2+v_1^2+v_2^2), \label{dkb}
\end{align}
which is well defined and $C^4$ also at $v_0=0$ or on the $v_0$-axis, provided $L:=\frac{1}{2}(-v_0^2+v_1^2+v_2^2)$ there.
For $\alpha=0$ this is the Finsler Lagrangian of a Minkowski spacetime of dimension $2+1$.  By positive homogeneity the metric $g$ is determined by its value over the unit sphere $v_0^2+v_1^2+v_2^2=1$, and over this compact set it is certainly Lorentzian for sufficiently small $\alpha$. Thus, by continuity, we can conclude that  for sufficiently small $\alpha$ this Finsler Lagrangian determines a Lorentz-Finsler space in Beem's sense (actually a Lorentz-Minkowski space). A simpler 1+1 example is provided by  $L(x,v)=\frac{1}{2} (1-\alpha e^{-(\frac{v_0}{v_1})^2-(\frac{v_1}{v_0})^2} ) (-v_0^2+v_1^2)$. The continuity of the Hessian follows from the smoothness of the function $e^{-\frac{1}{x^2}-x^2}$ at the origin.
\end{example}

Compared with Asanov's, Beem's definition of Lorentz-Finsler space  is more
economic, for there is no need to introduce a convex subbundle among
the fundamental ingredients of a Finsler spacetime. This fact, at
least at the physical level, is quite important since we do not wish
to introduce too many dynamical fields among the gravitational
degrees of freedom.

A first aspect of Beem's definition which is often criticizes is
that of not being able to accommodate the typical electromagnetic
Lagrangian (i.e.\ there is no Beem's Lagrangian $L(x,v)$ such that
$F:=\vert L\vert^{1/2}$ is of Randers type) for $L$ would not be
differentiable at some point different from the origin. However,
this is certainly not a problem of Beem's definition of Finsler
spacetime, but rather a signal for those researchers who try to use
the Finsler formalism to unify gravity and electromagnetism that,
probably, such unification cannot be accomplished through such a
simple extension. This limitation in Beem's
definition is indeed desirable provided that we regard the notion of Finsler
spacetime as a step towards an extension of general relativity
into a natural and comprising gravitational theory. In this
sense the restrictions implied by  Beem's definition seem to
indicate some precise features of the Finsler generalization,
features that we wish to explore.

\begin{remark} \label{mxo}
In this connection the paper by L\"ammerzahl, Perlick and Hasse \cite{lammerzahl12}
 studies a low field Finslerian generalization of the Schwarzschild metric (see also \cite{li14}). To that end they introduce the notion of  static Finsler spacetime as the direct sum of a `space Finsler metric' with a temporal metric part. It is known, already in the positive definite case, that the direct product of Finsler spaces is not as natural as in Riemannian geometry \cite[p.\ 13]{chern05}. As a consequence, in their work these authors have to pass to their mentioned notion of Finsler spacetime in which $g$ is defined only almost everywhere.
  However, a Finsler Lagrangian such as (\ref{dkb}) should probably be considered as static since it is invariant under translations with respect to any direction, and is such that the distribution of planes $\textrm{Ker} ( g_{(x,a)\, \nu \mu} a^\nu \dd x^\mu)$ is integrable for every $a$. However, it is not static according to the definition of \cite{lammerzahl12} since the metric obtained from (\ref{dkb}) has non-vanishing off-diagonal components.
All this seems to suggest that space and time in a Finsler spacetime  are more tightly related than in Lorentzian geometry. This issue certainly deserves to be further investigated.
\end{remark}

Unfortunately, Beem's study does not clarify the causal structure of the theory, not even at a point. Let the subsets of $T_xM\backslash 0$
\begin{align*}
I_x&=\{v\colon g_{v}(v,v)<0\},\\
J_x&=\{v\colon g_{v}(v,v)\le 0\},\\
E_x&=\{v\colon g_{v}(v,v)=0\}=J_x\backslash I_x,
\end{align*}
 denote the sets of
{\em timelike}, {\em causal} and {\em lightlike} vectors at $x\in M$, respectively.
To start with we need to answer the next question which has been overlooked by previous researchers
\begin{itemize}
\item[] (i) Is it possible that some of the sets $I_x$,  $E_x$, $(T_xM\backslash 0)\backslash J_x$, be empty?
\end{itemize}
Some authors might have assumed a negative answer on the basis of the Lorentzianity of the manifold $(T_xM\backslash 0,g_v)$.
We shall see that it is possible to assign a Lorentzian metric
$g_v$ which satisfies all the assumptions but Eq.\ (\ref{jui}a) and
such that $I_x$ is empty (see Example \ref{pip}). Therefore the
intuitive expectation that the above sets are non-empty  requires
justification.

Beem's definition does not clarify the
structure of $I_x$. One of his results complemented with  an
observation by Perlick \cite{perlick06} shows that each
component of $I_x$ is convex, but several questions remained open:
\begin{itemize}
\item[] (ii) Can the components of $I_x$ have closures which intersect?  (iii) Can they be infinite in number? (iv) Does the number
of components depend on $x$? (v) Can they contain some line passing through the origin\footnote{Recall that we defined $I_x, J_x, E_x$ as subsets of $T_xM\backslash 0$, that is, we removed the origin, so the line in question (v) is really a line passing through the origin minus the origin itself.}
(infinite propagation speed in some direction)?
\end{itemize}
 and most importantly:
\begin{itemize}
\item[] (vi) Can there be
more or less than two components?
\end{itemize}
Beem tried to shed some light on these questions in the 2-dimensional case. He provided examples which show that the number of
connected components  must be and can be any multiple of 2 (4 in the
reversible case). This result was certainly interpreted negatively
 by theoretical physicists as a causal structure presenting say, six light
cones, cannot be easily interpreted.
Certainly, such a structure  does not approximate in any sense
the physical spacetime of our experience.

 It is clear that given
this situation the removal of the unwanted cones, and hence the
restriction  to a convex conic subbundle of $TM$ as in Asanov's
strategy could be considered a good compromise.

\begin{remark}
Actually, some authors   look for these multi-cone  features in
order to model  two or more  signal speeds (birefringence,
multi-refringence,  bimetric gravity theories)
\cite{skakala09,pfeifer11}. However, due to the result on the
convexity of the components of $I_x$ cited above, no two different
cones can be intersecting (for they would be the same component and
hence the same cone), so that, coming to their boundaries, they
cannot stay one inside the other.
\end{remark}

In order to solve the problem of the undesired cones, other authors,
including Ishikawa \cite{ishikawa81}, suggested that the causal
structure in Finsler geometry is actually Lorentzian. Given a
section $\sigma\colon M\to TM\backslash 0$, it is possible to construct a
Lorentzian metric $g_{\sigma(x)}$ and hence to define the causal
character of a vector $v\in T_xM\backslash 0$ through the sign of
$g_{\sigma(x)}(v,v)$. Of course in this approach one would have to
give a dynamics for $\sigma(x)$, and also, to be consistent, the
motion of free particles on $M$ would be given by geodesics on $(M,
g_{\sigma(x)})$ otherwise the causal character of the tangent vector
would not be preserved. However, he agrees with most authors that the motion of massive particles and light should be described by Finsler geodesics, namely by the stationary points of the functional $\int \!L(x,\dot x) \dd t$, a fact which makes his hybrid approach untenable.

It is clear that many of these interpretational problems would be
solved if we could show that in more than two spacetime dimensions
the number of components of $I_x$ is 2 (question (vi)), for this fact would show that  Beem's choices for the definitions of Finsler spacetime and for the notions of timelike, causal and lightlike  vectors are physically satisfactory.


For many purposes a  technical way to get rid of the undesired cones
consists in introducing and assuming  the existence of a global
timelike vector field $T$, so as to restrict oneself to those
timelike vectors $v$ which satisfy $g_{v}(v,T)<0$. Such vectors
would be called {\em future directed}. This is the approach followed
by Perlick \cite{perlick06} and Gallego Torrom\'e, Piccione and
Vit\'orio \cite{gallego12} in their study of Fermat's principle in
Finsler spacetimes. They face the following problem:
\begin{itemize}
\item[](vii) if $v$ and $T$ belong to the same component of $I_x$ can $v$ be
non future-directed, that is $g_{v}(v,T)\ge 0$?
\end{itemize}
This is important in order to provide a clear notion of observer.
They circumvent this difficulty observing that for the variational
purposes of their study the continuity of $g$ on $(x,v)$ guarantees that
$g_{v}(v,T)$ does not change sign in a sufficiently small
neighborhoods of the curve on $TM$. However, they observe that under  present
knowledge of the theory there is no reason to expect the sign to be
negative for any timelike $v$ in the same component of $T$.

Another difficulty is met in the introduction of an Hamiltonian
framework for geodesics. As we shall see  in Eq.\ (\ref{njr}), the
conjugate momenta is $p=g_{v}(v,\cdot)$. It is easy to show that due
to the non-degeneracy of $g$ the map $v\mapsto p$ is locally
injective. However, it is unclear whether it is  globally injective. While this map is
injective if $g_v$ does not depend on $v$ (Lorentzian manifold) this
is not obvious in the general case. Indeed, Perlick \cite{perlick06}
mentions this difficulty while Sk\'akala and Visser \cite{skakala09}
circumvent it  introducing the injectivity of the map $v\mapsto p$
in the very definition of Finsler spacetime. Thus we ask

\begin{itemize}
\item[](viii) Can there be $v_1,v_2\in T_xM\backslash 0$, $v_1\ne v_2$, such that
$g_{v_1}(v_1,\cdot)=g_{v_2}(v_2,\cdot)$?
\end{itemize}

In this work we will prove that all the above pathologies do not
really occur for reversible Finsler spacetimes with more that two dimensions, that is,
all the above questions have a negative answer. Actually, as we shall see, only question (vi) requires reversibility.

%
%

Given  two light cones we can decide, at least locally, to
call one {\em future} and the other {\em past}.  Furthermore, they
are strictly convex so that, as they do not contain lines passing through the origin (i.e.\ they
are {\em sharp}), they describe finite speed signals. The past and
future light cones are one the opposite of the other for reversible
spacetimes.

Questions (vii) and (viii)  will
be answered through a Finslerian generalization of the reverse
Cauchy-Schwarz inequality, where for (viii) it will be important the equality case. This inequality will also lead to a
Finslerian reverse triangle inequality.

Given these results we can more properly define
the Finsler spacetime as follows

\begin{definition} \label{gyh}
A Lorentz-Finsler manifold is {\em time orientable} if it admits
some continuous global timelike vector field.
\end{definition}

As in Lorentzian geometry, the global timelike vector field is used to provide a continuous selection of  timelike cones, the {\em future cone} being that containing the  vector field.

\begin{definition} \label{gyk}
A Finsler spacetime is a time-oriented Lorentz-Finsler manifold, namely one for which a continuous selection of {\em future} timelike cones has been made.
\end{definition}

According to this definition in a Finsler spacetime we can speak of {\em future cone} but not necessarily of {\em past cone}.
%
%
%
%
%
%
Of course, the main result of this work shows that, under reversibility and for  dimensions larger than two, it makes sense to speak of {\em past cone}.


The results of this work joined with those on local convexity proved
in  \cite{minguzzi13d} establish that a Finslerian causality theory
can be developed in much the same way as it has been done for the
usual one starting from Lorentzian geometry
\cite{hawking73}. In fact once some key results have
been established,  most proofs follow word by word from those for
the Lorentzian spacetime theory (already for $C^{1,1}$ metrics \cite{minguzzi13d}).

\begin{remark} \label{uyy}
 It must be remarked that if the Finsler spacetime is
not reversible then there are really {\em two} causality theories,
one for the  distribution of future cones and the other for the
distribution of past cones. In this respect Finsler spacetime
physics might presents bi-refringence features as it can accommodate the propagation of two
different fundamental signals (as in bimetric
gravity theory). The negative results obtained in \cite{perlick06,skakala09} seem to be related to the unnecessary restriction to  reversible metrics.
\end{remark}
%
%
%
%

\section{Finsler geometry at a point}
The results of this section will concern the geometry of
Lorentz-Finsler manifolds on a tangent space $V:=T_xM$, so we can
omit any reference to $x$. We shall often identify $T_vT_xM \simeq T_xM$  (see Remark \ref{ssj} for more details).

The results of this section hold for a general  Lorentz-Minkowski space $(V,L)$, where $L\colon V\to \mathbb{R}$ is  $C^{3,1}$. With $n+1$  we shall denote the dimension of  $V$.
%
%
As commented above we have the identities
\begin{align} \label{koi}
(i) \quad \frac{\p L}{\p v^\mu}(v)&=g_{v\, \mu \nu } v^\nu,
\qquad (ii)  \quad \frac{\p g_{v\, \mu \nu }}{\p v^\alpha} \, v^\nu=0,
\qquad (iii)  \quad \frac{\p g_{v\, \mu \nu }}{\p v^\alpha} \,
v^\alpha=0,
\end{align}
We have already defined the notions of timelike, causal, and
lightlike  vectors as subsets of $V\backslash \{0\}$ defined through the sign of $g_v(v,v)$. We have also defined the corresponding
subsets $I,J,E \subset V\backslash \{0\}$. For shortness we will write $V\backslash 0$ in place of $V\backslash\{0\}$.


The next remark gives more details on the relationship between pseudo-Finsler spaces and the pseudo-Minkowski geometry of their tangent spaces.

\begin{remark} \label{ssj}
Let $(M,L)$ be a pseudo-Finsler manifold and let $x\in M$, $v\in T_xM\backslash 0$.
Given a vector $b \in T_xM$, $b=b^\mu \p/\p x^\mu$, we can send it to a vector $b^{\vee}\in T_vT_xM$ through the {\em vertical lift} $\vee: T_xM \times T_xM\to TT_xM$, $(v,b) \mapsto (v, b^{\vee})$,  $b^{\vee}:=b^\mu \p/\p v^\mu $. The vertical lift  $\vee$ establishes a linear isomorphism between $T_xM$ and $T_vT_xM$.
The pullback function $(\vee^{-1})^*L\colon T T_xM \to \mathbb{R}$ allows us to introduce the pseudo-Minkowski space $(T_xM,(\vee^{-1})^*L)$. Analogously, the pullback metric $(\vee^{-1})^* g\colon T_xM\backslash 0 \to  T^*T_xM \otimes T^*T_xM$ is the quadratic form
\[
(\vee^{-1})^* g=g_{ \mu \nu}\, \dd v^\mu \otimes \dd v^\mu=\frac{\p^2 L}{\p
v^\mu\p v^\nu} \,\dd v^\mu \otimes \dd v^\mu,
\]
(the reader familiar with pseudo-Finsler geometry will recognize the  restriction of the Sasaki metric to $T_xM\backslash 0$.)  which allows us to introduce the pseudo-Riemannian manifold $(T_xM\backslash 0, (\vee^{-1})^* g_v)$. This is a formal trick, often made without mention, which simplifies the study of the metric $g$ at a point. For simplicity, and with some abuse of notation, we shall denote $(\vee^{-1})^*L$ and $(\vee^{-1})^* g$ with  $L$ and $g$, respectively.
\end{remark}

\subsection{Non-emptyness of spacelike and timelike sectors}
In this and the next sections we shall need the definition of
coordinate sphere. Let us choose a base on $V$ and let $\Vert
v\Vert=[(v^0)^2+(v^1)^2+\cdots +(v^{n})^2]^{1/2}$ be the Euclidean
norm induced by the base on $V$. Let $S:=\{v: \Vert v\Vert=1\}$ be
the coordinate unit sphere of $V$.

Let us consider question (i).

\begin{theorem}
The sets $I$,  $E$, $(V\backslash 0)\backslash J$, are non-empty.
\end{theorem}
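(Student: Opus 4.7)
The plan is to prove $I \neq \emptyset$ and $(V\setminus 0)\setminus J \neq \emptyset$ separately, each by a Lagrange-multiplier analysis of $L$ restricted to the Euclidean coordinate unit sphere $S \subset V$; then $E \neq \emptyset$ follows at once by continuity of $L$ and connectedness of $V\setminus 0$, which holds since Lorentzian signature forces $\dim V = n+1 \geq 2$.

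For $I \neq \emptyset$, I argue by contradiction, assuming $L \geq 0$ on $V\setminus 0$, and let $v_0 \in S$ attain $c := \min_S L \geq 0$. By the homogeneity identity (\ref{koi}i) the Euclidean gradient of $L$ at $v_0$ is the covector $g_{v_0}(v_0, \cdot)$, so the Lagrange condition, after contracting with $v_0$ to pin down the multiplier, reads $g_{v_0}(v_0, e) = 2c\, \langle v_0, e\rangle_E$ for every $e \in V$. Hence the decomposition $V = \mathbb{R} v_0 \oplus v_0^{\perp_E}$ is $g_{v_0}$-orthogonal. Along the great circle $\gamma(t) = \cos(t)\, v_0 + \sin(t)\, e$ with $e \in v_0^{\perp_E}$, $\|e\|_E = 1$, computing $\frac{d^2}{dt^2} L(\gamma(t))|_{t=0}$ and using (\ref{koi}ii) to discard the $\partial g/\partial v$ cross term leaves $g_{v_0}(e,e) - 2c$; the second-order minimum condition then gives $g_{v_0}(e, e) \geq 2c$. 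If $c > 0$, this makes $g_{v_0}$ positive definite on $v_0^{\perp_E}$ and, together with $g_{v_0}(v_0, v_0) = 2c > 0$ and orthogonality, positive definite on all of $V$, contradicting Lorentzian signature. If $c = 0$, then $v_0$ is a minimizer of $L$ on the open set $V\setminus 0$, so its Euclidean gradient vanishes; by (\ref{koi}i) this reads $g_{v_0}(v_0, \cdot) = 0$, and non-degeneracy of $g_{v_0}$ forces $v_0 = 0$, contradicting $v_0 \in S$.

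A symmetric argument, starting from $L \leq 0$ on $V\setminus 0$ and the maximum $c' := \max_S L \leq 0$ attained at some $v_1 \in S$, yields $(V\setminus 0)\setminus J \neq \emptyset$: when $c' < 0$ one gets $g_{v_1}$ negative definite on $V$, again contradicting Lorentzian signature, and when $c' = 0$ the gradient argument again forces $v_1 = 0$. Finally, $L$ is continuous on the connected set $V\setminus 0$ and now takes both strictly positive and strictly negative values, so the intermediate value theorem provides a zero of $L$, giving $E \neq \emptyset$.

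The main obstacle is keeping both homogeneity identities in play at the right moments: (\ref{koi}i) is needed both at the Lagrange-multiplier step (to identify the gradient with $g_{v_0}(v_0, \cdot)$) and at the final non-degeneracy step, while (\ref{koi}ii) is essential to eliminate the $\partial g/\partial v$ cross term in the second-derivative test. Example \ref{pip}, which exhibits a Lorentzian $g_v$ violating (\ref{koi}ii) with $I$ empty, shows that without these identities the conclusion genuinely fails, so there is no hope for a shorter argument that bypasses them.
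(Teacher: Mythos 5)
Your proof is correct and follows essentially the same route as the paper's: Lagrange multipliers for $L\vert_S$ together with identity (\ref{koi}i) yield the $g_{v_0}$-orthogonal splitting $V=\mathbb{R}v_0\oplus v_0^{\perp_E}$ with eigenvalue $\lambda=2L(v_0)$, and a second-order analysis at the constrained extremum supplies the sign information on the orthogonal complement. The only (cosmetic) difference is the direction of the final contradiction: you let the second-order condition force $g_{v_0}$ to be definite on all of $V$ and contradict the Lorentzian signature, whereas the paper uses the signature to pick a perturbation $w$ in the tangent hyperplane which, after rescaling back to $S$ via homogeneity, contradicts extremality.
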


\begin{proof}
Let $\hat{L}:=L\vert_S$ and let $\hat v_{min}$ and $\hat v_{max}$ be
points where $L$ attains its minimum and maximum value over the
compact set $S$. They must be stationary points of
$2L-\lambda(v\cdot v-1)$ where $\lambda$ is a Lagrange multiplier
and $``\cdot "$ is the scalar product induced by the coordinate
system. Thus using Eq.\ (\ref{koi}i) $g_{\hat
v_{m}}(\cdot,\hat{v}_{m})=\lambda \hat v_{m}\cdot$, where $\hat v_m$ is
any among $\hat v_{min}$ and $\hat v_{max}$ (observe that $\lambda$ cannot
vanish because $g_{\hat v_{m}}$ is non-singular). Let us identify
$V$ with $V^*$ through the scalar product $``\cdot "$, the
previous equation establishes that $g_{\hat v_{m}}$ regarded as an
endomorphism of $V$ has eigenvalue $\lambda$ and eigenvector $\hat v_m$.
Evaluating the previous equation on $\hat v_m$  we find $0\ne
\lambda=g_{\hat v_{m}}(\hat{v}_{m},\hat{v}_{m})=2L(\hat{v}_{m})$.
This means that we can find $\textrm{dim} V-1$ other eigenvectors of $g_{\hat{v}_{m}}$ which stay
in the hyperplane $P$ which is $\cdot\,$-orthogonal to $\hat{v}_{m}$.
We call $\check P$ the hyperplane passing through $\hat v_m$ and
parallel to $P$, hence tangent to $S$. Taking into account the Lorentzian signature of $g_v$, we have that if  $g_{\hat
v_{m}}(\hat{v}_{m},\hat{v}_{m})<0$ then all these eigenvectors have
positive eigenvalue, otherwise there is one with negative
eigenvalue.

Now suppose that $(V\backslash 0)\backslash J$ is empty, then $L\le 0$. Let
$\hat v_m=\hat v_{max}$, and joining the assumption with an  observation above
$L(\hat{v}_{max})<0$, thus $g_{\hat v_{max}}$ is positive definite on
$P$. Let $w\in P$, $w\ne 0$, then
\begin{align*}
L(\hat v_{max}+\epsilon w)=L(\hat v_{max})+\epsilon  g_{\hat
v_{m}}(\hat{v}_{m},w)+\frac{\epsilon^2}{2} g_{\hat v_{m}}(w,w)+o(\epsilon^2)
\end{align*}
Since $g_{\hat v_{m}}(\hat{v}_{m},w)=\lambda  \hat{v}_{m}\cdot w=0$,
for sufficiently small $\epsilon$, $v:=\hat v_{max}+\epsilon w\in
\check{P}$ is such that $0\ge L(v)>L(\hat v_{max})$.  Since $\Vert
v\Vert>1$, defined $\hat v=v/\Vert v \Vert$ we have
\[
0\ge L(\hat{v})=\frac{1}{\Vert v \Vert^2}\, L(v)\ge L(v)>L(\hat v_{max}),
\]
in contradiction with the
maximality of $\hat{L}$ at $\hat v_{max}$. Thus $(V\backslash 0)\backslash J\ne
\emptyset$.
%

Suppose that $I$ is empty, then $L\ge 0$. Let $\hat v_m=\hat v_{min}$, and
joining the assumption with an observation above
$L(\hat{v}_{min})>0$, thus $g_{\hat v_{min}}$ is  non-degenerate and
indefinite on $P$. Let $w\in P$, $w\ne 0$, be such that
$g_{\hat v_{min}}(w,w)<0$, then
\begin{align*}
L(\hat v_{min}+\epsilon w)=L(\hat v_{min})+\epsilon  g_{\hat
v_{m}}(\hat{v}_{m},w)+\frac{\epsilon^2}{2} g_{\hat v_{m}}(w,w)+o(\epsilon^2)
\end{align*}
Since $g_{\hat v_{m}}(\hat{v}_{m},w)=\lambda  \hat{v}_{m}\cdot w=0$,
for sufficiently small $\epsilon$, $v:=\hat v_{min}+\epsilon w\in
\check{P}$ is such that $0\le L(v)<L(\hat v_{min})$. Since $\Vert
v\Vert>1$, defined $\hat v=v/\Vert v\Vert$, we have $L(\hat{v})\le L(v)<L(\hat v_{min})$, in contradiction with the
minimality of $\hat{L}$ at $\hat v_{min}$. Thus $I\ne \emptyset$. By
continuity of $\hat{L}$ on $S$, $E$ is non-empty. $\square$
\end{proof}

\begin{example} \label{pip}
It is possible to construct a Lorentzian manifold $(V\backslash 0,
g_v)$ for which for every $s>0$, $g_{sv \, \mu \nu}=g_{v \mu \nu}$,
but $J=\emptyset$. Let $V=\mathbb{R}^4$ with coordinates
$(v^0,v^1,v^2,v^3)$, let $r=\Vert v\Vert$, and let $S^3=\{v:\Vert
v\Vert=1\}$ be the unit sphere. Let us introduce the Hopf fibration
$S^3\to S^2$ with the usual coordinates \cite{trautman84}
$(\psi,\theta,\phi)$, then
\[
g_v=\dd r^2+r^2[-(\dd \psi-\cos \theta \dd \phi)^2+\dd \Omega^2]
\]
satisfies the mentioned conditions. It does not contradict the above
theorem because Eq.\ (\ref{koi}i) is not satisfied (to see this the
metric must be rewritten in the coordinates $\{v^\alpha\}$), so it
cannot be obtained from a Finsler Lagrangian.
\end{example}

\subsection{Convexity of the causal sublevels of $L$}

Let us define for $c\ge 0$ the following subsets of $V\backslash 0$
\begin{align*}
J(c)&:=\{ v\in V\backslash 0\colon g_v(v,v)\le -c^2\},\\
I(c)&:=\{ v\in V\backslash 0\colon g_v(v,v)< -c^2\}, \\
E(c)&:=\{ v\in V\backslash 0\colon g_v(v,v)= -c^2\},
\end{align*}
so that $J(0)=J$, $I(0)=I$, $E(0)=E$.

Let us recall that a subset $S\subset V\backslash 0$ is {\em connected} if
regarded as a topological space with the induced topology it is not
the union of two disjoint open sets. Equivalently $S$ is connected
if there are no subsets $H,K\subset S$ such that $S=H\cup K$ and
$H\cap \bar{K}=\bar{H}\cap K=\emptyset$ (the closure is in the
topology of $S$) \cite[Theor. 26.5]{willard70}. The relation $x\sim
y$ if $x$ and $y$ belong to a connected subset of $S$, is an
equivalence relation. Its equivalence classes are connected sets
called  the {\em components} of $S$. Every maximal connected subset
is a component \cite{willard70}. The continuous image of connected
sets is connected.

For every $s>0$ the homothety $v\mapsto sv$ is a homeomorphism and
hence sends components of $J(c)$, $I(c)$, $E(c)$,  into components
of $J(sc)$, $I(sc)$, $E(sc)$,  respectively. As it is invertible
this correspondence between components of, say, $J(c)$ and $J(c')$
for $c,c'>0$, is actually a bijection. As a consequence, one could
restrict the attention to the  components of $J(1)$, $I(1)$, $E(1)$,
$J$, $I$, $E$. Different components will be distinguished through  an index, e.g.\
$I^\alpha$.

\begin{remark}
A set made by two closed cones intersecting just at the origin of $V$ is connected according to the topology of $V$. Since we use the topology of $V\backslash 0$ the  corresponding notion of 'component' is non-trivial.
%
%
%
%
\end{remark}

\begin{lemma} \label{juz}
For every $c>0$ the set $E(c)$ is a $C^{3,1}$ orientable imbedded
codimensional one submanifold (hypersurface) of $V\backslash 0$.  We have for every $c$, $\p
I(c)=\p J(c)=E(c)$, $\overline{I(c)}=J(c)$, $\textrm{Int}\,
J(c)=I(c)$, where we use the topology of $V\backslash 0$. In particular,
\[
\p I=\p J=E, \quad \overline{I}=J, \quad \textrm{Int}\, J=I.
\]
\end{lemma}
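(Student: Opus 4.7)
The engine of the whole lemma is that the differential $dL_v=g_v(v,\cdot)$ (formula (\ref{koi}i)) is nonvanishing on every level set $E(c)$, $c\geq 0$. For $c>0$ and $v\in E(c)$ this is immediate by evaluating on $v$ itself: $dL_v(v)=g_v(v,v)=-c^2\neq 0$. For $c=0$ one has $dL_v(v)=0$, but since $v\neq 0$ and $g_v$ is non-degenerate there is some $w\in V$ with $g_v(v,w)\neq 0$, so $dL_v\neq 0$ all the same. Given this, the first assertion of the lemma — that for $c>0$, $E(c)$ is a $C^{3,1}$ imbedded codimension-one submanifold of $V\backslash 0$ — is an application of the regular value theorem to $L\in C^{3,1}$; orientability follows because $dL$ is a global, nowhere-vanishing conormal 1-form on $E(c)$, coorienting it inside the trivially orientable ambient manifold $V\backslash 0$.

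For the topological identities I would treat $c>0$ and $c=0$ separately. In the case $c>0$ and $v\in E(c)$, the radial curve $t\mapsto tv$ yields $L(tv)=-t^2c^2/2$, strictly decreasing through $-c^2/2$ at $t=1$; hence $tv\in I(c)$ for $t$ slightly above $1$ and $tv\notin J(c)$ for $t$ slightly below $1$. So $v\in\overline{I(c)}\cap\overline{(V\backslash 0)\backslash J(c)}$, and combined with $v\notin I(c)$, $v\in J(c)$ this gives $v\in\p I(c)\cap\p J(c)$. For $c=0$ the radial scaling is useless (it preserves $E$), so one falls back on transversality: choosing $w$ with $g_v(v,w)\neq 0$, the expansion $L(v+\epsilon w)=\epsilon\,g_v(v,w)+O(\epsilon^2)$ takes both signs in every neighborhood of $v$, and the same conclusion follows. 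The reverse inclusions $\p I(c)\cup\p J(c)\subseteq E(c)$ are standard: $I(c)$ is open by continuity of $L$, so a point of $\p I(c)$ must lie in $\{L\geq -c^2/2\}$ yet also be a limit of $\{L<-c^2/2\}$ points, forcing $L(v)=-c^2/2$; the argument for $\p J(c)$ is symmetric.

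Finally the equalities $\overline{I(c)}=J(c)$ and $\textrm{Int}\,J(c)=I(c)$ drop out formally from the partition $J(c)=I(c)\sqcup E(c)$, the openness of $I(c)$, the closedness of $J(c)$, and the already established $E(c)\subseteq\p I(c)\cap\p J(c)$: closedness of $J(c)$ together with $E(c)\subseteq\overline{I(c)}$ gives $\overline{I(c)}=J(c)$, and openness of $I(c)$ together with $E(c)\not\subseteq\textrm{Int}\,J(c)$ gives $\textrm{Int}\,J(c)=I(c)$. The only real subtlety, and the main obstacle to simply reading the proof off from Lorentzian geometry, is the $c=0$ case: the radial homothety degenerates along the light cone, which is precisely why one must verify the nonvanishing of $dL$ on all of $E$ — including on lightlike vectors — and invoke direct transversality to separate the two sides of $E$.
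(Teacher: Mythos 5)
Your proof is correct and follows essentially the same route as the paper: the non-vanishing of $dL_v=g_v(v,\cdot)$ on $V\backslash 0$ gives the regular-value/implicit-function argument for the submanifold and orientability claims, and the fact that $L$ takes values on both sides of the level in every neighborhood of a point of $E(c)$ yields all the boundary, closure and interior identities. The only cosmetic difference is that the paper handles $c>0$ and $c=0$ uniformly via the transversal direction (your $w$ with $g_v(v,w)\neq 0$ works for every $c$), whereas you additionally use the radial homothety when $c>0$; both are fine.
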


\begin{proof}
The sets $E(c)$ are the level sets of $L$ which is $C^{3,1}$. Its differential
 $g_v(v,\cdot)$ is non-vanishing for $v\ne 0$ thus the implicit function theorem
implies the first claim \cite[Cor.\ 8.10]{lee03}. Observe that since
$g_v(v,\cdot)\ne 0$, $L$ is larger than $-c^2$ on one side and
smaller on the other, thus $E(c)$ is also oriented. This observation
and the continuity of $L$ imply $\p I(c)=E(c)$,
$\overline{I(c)}=J(c)$ and $\textrm{Int} \,E(c)=\emptyset$. From here
the other equations follow easily in particular the last one is
obtained setting $c=0$. $\square$
\end{proof}
%
%
%
%
%
%
%

The following result improves a previous result by Beem and
Perlick\footnote{Beem seems to use a theorem by Hadamard which
characterizes the convexity of a set through the curvature of its
boundary. However, it is unclear to this author which version of the
theorem Beem is using since Hadamard's  theorem holds for sets
having a compact boundary.} \cite{beem70,perlick06}.

\begin{theorem} \label{jod}
For $c>0$ the connected components of $J(c)$ are strictly convex.

For $c=0$ the connected components $J^\alpha$ of $J$ are convex, moreover any
non-trivial convex combination of two vectors in $J^\alpha$ returns
a timelike vector in $\textrm{Int}\,  J^\alpha$  unless the vectors
are both proportional to a lightlike vector.\footnote{In Corollary
\ref{sha} we shall establish that they have the same orientation.}

For $c\ge 0$ the connected components of $I(c)$ are convex.
\end{theorem}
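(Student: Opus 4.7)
The plan rests on a second-derivative formula along affine segments. For $v_0,v_1\in V\backslash 0$ set $w:=v_1-v_0$, $v_t:=v_0+tw$, and $h(t):=2L(v_t)=g_{v_t}(v_t,v_t)$. Using (\ref{koi}i) one finds $h'(t)=2g_{v_t}(v_t,w)$; a further differentiation combined with (\ref{koi}ii) kills the $v$-derivative of $g_{v_t\,\mu\nu}v_t^\nu$ and yields the clean identity $h''(t)=2g_{v_t}(w,w)$.

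The key observation (a Finslerian maximum principle) is the following: at any interior critical point $t^*\in(0,1)$ of $h$ for which $h(t^*)<0$, the condition $h'(t^*)=0$ says that $w$ is $g_{v_{t^*}}$-orthogonal to the $g_{v_{t^*}}$-timelike vector $v_{t^*}$. Thus $w$ lies in the spacelike orthogonal complement, which is positive definite since $g_{v_{t^*}}$ is Lorentzian, so $h''(t^*)=2g_{v_{t^*}}(w,w)>0$ whenever $w\neq 0$. Consequently $t^*$ is a strict local minimum of $h$, not a maximum; therefore if $v_0\neq v_1$ and the whole segment is timelike, then $h(t)<\max(h(0),h(1))$ throughout $(0,1)$.

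For $c>0$, strict convexity of $J^\alpha(c)$ reduces to showing that the segment from $v_0$ to $v_1$, both in $J^\alpha(c)$, actually stays in the timelike region. Since $J(c)\subset I$, the component $J^\alpha(c)$ sits inside a single connected component $I^\beta$ of $I$, and it suffices to prove that $I^\beta$ is segment-convex. I would argue by connectedness: the set $S\subset I^\beta\times I^\beta$ of pairs whose segment lies entirely in $I^\beta$ contains the diagonal and is open (compactness of the segment meets openness of $I^\beta$); it is closed because, for a convergent sequence $(v_0^n,v_1^n)\to(v_0,v_1)$ in $S$, the bound $h_n(t)\leq\max(h_n(0),h_n(1))$ from the maximum principle passes to the uniform limit to give $h(t)\leq\max(h(0),h(1))<0$, so the limit segment is timelike, connected, and shares its endpoints with $I^\beta$, hence lies in $I^\beta$. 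Combining $S=I^\beta\times I^\beta$ with the strict maximum principle yields $h(t)<\max(h(0),h(1))\leq -c^2$ on $(0,1)$, i.e., $v_t\in I(c)$.

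For $c=0$ the same strategy works by approximation: pairs $v_0,v_1\in J^\alpha$ with components in $E$ are realized as limits of pairs in $I^\alpha$, and the corresponding limit segments lie in $\overline{I^\alpha}=J^\alpha$. The equality case is again controlled by $h''=2g_{v_t}(w,w)$: if the interior of the limit segment meets $E$ at some $t_0\in(0,1)$, then $h(t_0)=0$ is an interior maximum, forcing $h'(t_0)=h''(t_0)=0$; but now $v_{t_0}$ is null and the restriction of $g_{v_{t_0}}$ to $v_{t_0}^\perp$ is only positive semidefinite, with one-dimensional kernel spanned by $v_{t_0}$ itself, so $h''(t_0)=0$ forces $w$ parallel to $v_{t_0}$, which in turn forces $v_0$ and $v_1$ to be proportional to the single null direction $v_{t_0}$. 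The convexity of the components of $I(c)$ for $c\geq 0$ is immediate, since each one is the interior of a (strictly, when $c>0$) convex component of $J(c)$. The principal obstacle is the closedness step above: a limit of timelike segments could a priori touch the light cone in the interior, and it is only the max-principle bound $h_n(t)\leq\max(h_n(0),h_n(1))$, rather than the naive pointwise $h_n<0$, that survives the limit and rules this out.
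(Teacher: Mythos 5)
Your proof is correct, and its analytic core --- the identity $h''(t)=2g_{v_t}(w,w)$ obtained from (\ref{koi}), together with the observation that at an interior critical point $h'(t^*)=0$ forces $w$ to be $g_{v_{t^*}}$-orthogonal to $v_{t^*}$, hence spacelike when $v_{t^*}$ is timelike and at worst proportional to $v_{t^*}$ when $v_{t^*}$ is lightlike --- is exactly the Taylor-expansion argument of the paper (Eqs.\ (\ref{jlo})--(\ref{joo})), including the treatment of the degenerate case on $E$. Where you genuinely diverge is in the passage from this local statement to global convexity. The paper invokes the Tietze--Nakajima theorem (closed, connected and locally convex implies convex), verifies local convexity of $J^\gamma(c)$ for $c>0$ inside a small convex neighborhood $U\subset I$ (which guarantees for free that the test segment is timelike), and only afterwards deduces convexity of $I^\alpha=\bigcup_{c>0}J^\alpha(c)$ and of $J^\alpha=\overline{I^\alpha}$. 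You instead prove convexity of the components of $I$ first, by an open--closed argument on the set of pairs in $I^\beta\times I^\beta$ whose joining segment is timelike, and you correctly isolate the delicate point: only the maximum-principle bound $h_n(t)\le\max(h_n(0),h_n(1))$ survives the limit, and it is this bound (not the naive $h_n<0$) that keeps the limit segment off $E$. This buys you a self-contained proof with no appeal to Tietze--Nakajima, at the price of the limiting argument. Two small points you should make explicit: (a) in the closedness step you need $L$ extended continuously by $0$ to the origin, so that the uniform bound $h\le\max(h(0),h(1))<0$ also excludes the limit segment passing through $0$, where $g$ is undefined; (b) in the $c=0$ case the identification of a component $J^\alpha$ of $J$ with $\overline{I^{\alpha}}$ for a single component of $I$ deserves a word (e.g.\ via Lemma \ref{juz}: $E$ is an embedded hypersurface across which $L$ changes sign, so each point of $E$ is adherent to exactly one component of $I$) --- though the paper's own proof is equally terse on this point, and both proofs defer the exclusion of antipodal pairs to Corollary \ref{sha}.
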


\begin{proof}
According to a theorem due to Titze and Nakajima \cite{burago74} a
closed connected set $F\subset \mathbb{R}^n$ is convex if and only
if it is locally convex, in the sense that for every $p\in F$ there
is some convex neighborhood $U\ni p$ such that $U\cap F$ is convex.
Thus let $c>0$ and let $v \in J^\gamma(c)$, where $J^\gamma(c)$ is a
component of $J(c)$. Since $g_v(v,v)\le -c^2<0$, $v\in I$, and since
$I$ is open there is a convex neighborhood $U\ni v$ such that
$U\subset I$ and $0\notin U$. Suppose by contradiction that
$J^\gamma(c)\cap U$ is not convex, then there are
 $v_1,v_2\in I$, $v_1\ne v_2$, which
belong to this set but such that $w(\alpha):=(1-\alpha)v_1+\alpha
v_2$ does not belong to it for some $\bar{\alpha} \in(0,1)$.
However, observe that the whole segment $w([0,1])$ belongs to $U$
since this set is convex. Then there is some $\beta\in (0,1)$ such
that $w(\beta)$ maximizes
$f(\alpha):=2L(w(\alpha))=g_{w(\alpha)}(w(\alpha),w(\alpha))$.
Clearly $f(\beta)\ge f(\bar{\alpha})> -c^2$. Then Taylor expanding
$f$ at $\beta$, with $\Delta \alpha=\alpha-\beta$ the linear term
vanishes because $\beta$ is a maximum
\begin{equation} \label{jlo}
2\frac{\p L}{\p v^\gamma} \frac{\p w^\gamma(\alpha)}{\p
\alpha}\vert_{\alpha=\beta}=2g_{w(\beta)}(w(\beta),v_2-v_1)=0,
\end{equation}
 and the Taylor expansion reads
\begin{equation} \label{joo}
f(\beta+\Delta\alpha)=f(\beta)+ \Delta \alpha^2
g_{w(\beta)}(v_2-v_1,v_2-v_1)+o(\Delta \alpha^2).
\end{equation}
 However, Eq.\
(\ref{jlo}) shows that $v_2-v_1$ is $g_{w(\beta)}$-spacelike (in the
language of Lorentzian geometry), as it is orthogonal to the
$g_{w(\beta)}$-timelike vector $w(\beta)\in U\subset I$. Finally,
Eq.\ (\ref{joo}) gives a contradiction since the second order term
is positive, and hence $\beta$ cannot be a maximum. This argument
proves the convexity of $J^\gamma(c)$; its strict convexity follows
from the fact that the contradiction is reached also under the
weaker assumption $f(\beta)\ge -c^2$ which proves that no point in
the interior of the connecting segment can stay in $E(c)$. This fact
proves also that $I^\gamma(c)$ is convex.

Let us consider the case $c=0$. Since the map on $I$, $v \to cv$,
for given $c\ne 0$, is a homeomorphism, the components of $J(1)$ are
in one to one correspondence through this map with the components of
$J(c)$. Thus given a component $J^\alpha(1)$, the set
$J^\alpha(c):=cJ^\alpha$ is a component of $J(c)$. In particular it
contains $J^\alpha(c')$ for $c<c'$ and no other component of
$J(c')$. Thus $I^\alpha=\cup_{c>0} J^\alpha(c)$ is a component of
$I$ which is convex as any two points belong to some $J^\alpha(c)$.
The closure of a convex set is convex thus
$J^\alpha:=\overline{I^\alpha}$ is convex. Observe that $J^\alpha$
is connected because the closure of a connected set is connected.
Let
 $v_1,v_2\in J^\gamma$, $v_1\ne v_2$, by convexity $w(\alpha):=(1-\alpha)v_1+\alpha
v_2$ belongs to $J^\gamma$. Suppose that for some $\beta \in (0,1)$,
$w(\beta)$ does not belong to $I^\gamma$, then $f(\beta)=0$ and
hence $\beta$ is a maximum of $f$. Arguing as above
$g_{w(\beta)}(w(\beta),v_2-v_1)=0$ which proves that $v_2-v_1$ is
$g_{w(\beta)}$-orthogonal to the lightlike vector $w(\beta)$ and
hence either $g_{w(\beta)}$-spacelike or $g_{w(\beta)}$-lightlike and proportional
to $w(\beta)$. The Taylor expansion (\ref{joo}) shows that the
former possibility cannot hold. Observe that
$v_1=w(\beta)-\beta(v_2-v_1)$, $v_2=w(\beta)+(1-\beta)(v_2-v_1)$,
thus they are both proportional to the lightlike vector $w(\beta)$. $\square$
\end{proof}

Let $\hat{J}=J\cap S$, $\hat{I}=I\cap S$ and $\hat{E}=E\cap S$. The
next result answers questions (ii) and (iii).

\begin{proposition}
Different components of $\hat{E}$ are disjoint (light cones do not
touch) and they are finite in number.  The components of $\hat{I}$ have topology
$D^{n}$ while the components of $\hat{E}$ have topology $S^{n-1}$.

Moreover, suppose that $\textrm{dim}\, V\ge 3$. Each component of $\hat{I}$
has as boundary a component of $\hat{E}$ and conversely, every
component of $\hat{E}$ is the boundary of some component of
$\hat{I}$.
\end{proposition}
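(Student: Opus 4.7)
To handle the disjointness, note that each $\hat{E}^\alpha$ lies in the corresponding component $J^\alpha$ of the closed set $J$; since distinct components of $J$ are disjoint, $\hat{E}^\alpha \cap \hat{E}^\beta = \emptyset$ for $\alpha \ne \beta$, so no two light cones touch on $S$. For finiteness, suppose there were infinitely many components of $\hat{I}$ and pick a point $p_k$ in each; by compactness of $S$ a subsequence converges to some $p \in S$. The limit cannot be timelike or spacelike—since both regions are open in $S$, this would trap infinitely many $p_k$ in a single component of $\hat{I}$ or else outside $\hat{I}$ entirely—so $p \in \hat{E}$. But by Lemma~\ref{juz}, $E$ is smooth at $p$ and separates a $V\backslash 0$-neighborhood of $p$ into one open timelike half and one open spacelike half, forcing the nearby $p_k$ to lie in a single component of $I$, contradicting distinctness.

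For the topology of $\hat{I}^\alpha$, the key observation is that $I^\alpha$ is a sharp convex open cone: by Theorem~\ref{jod} $I^\alpha$ is convex, so if $v$ and $-v$ were both in $I^\alpha$ their midpoint $0 = \tfrac{1}{2}v + \tfrac{1}{2}(-v)$ would also be, contradicting $I^\alpha \subset V\backslash 0$. Consequently some linear functional $f$ on $V$ is positive on $I^\alpha$, and the radial projection $v \mapsto v/f(v)$ is a homeomorphism from $\hat{I}^\alpha$ onto the non-empty open convex subset $I^\alpha \cap \{f = 1\}$ of the affine hyperplane $\{f = 1\} \cong \mathbb{R}^n$, which is homeomorphic to $\mathbb{R}^n \cong D^n$. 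For $\hat{E}^\alpha \cong S^{n-1}$ I would strengthen sharpness to $J^\alpha$ itself, so that $f$ can be chosen with $f > 0$ on $J^\alpha \backslash 0$; then $J^\alpha \cap \{f = 1\}$ is a bounded closed convex set with non-empty interior, homeomorphic to the closed disk, and its boundary $\cong S^{n-1}$ corresponds to $\hat{E}^\alpha$ under the projection. The main obstacle is sharpness of $J^\alpha$ itself: if $\mathbb{R} v \subset J^\alpha$ then $J^\alpha$ is translation-invariant along $v$, so any $w \in I^\alpha$ yields a whole line $w + \mathbb{R} v \subset I^\alpha$; Taylor expanding $L$ along this line at points $s_0 v$ and using $g_{sv} = g_v$ for $s > 0$, $g_{sv} = g_{-v}$ for $s < 0$, together with the Lorentzian signature of $g_v$, should produce incompatible sign conditions on $g_v(v, w)$ and $g_{-v}(v, w)$.

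Finally, for the one-to-one correspondence under $\dim V \ge 3$, the boundary $\partial I^\alpha$ in $V$ of the sharp convex open cone is connected, and removing only the vertex $0$ preserves connectedness since the boundary has dimension $n \ge 2$; thus $E^\alpha = \partial I^\alpha \backslash 0$ and its spherical cross-section $\hat{E}^\alpha$ are connected. Conversely, for any $v \in E$ the smoothness of $E$ at $v$ (Lemma~\ref{juz}) implies that any sequence in $I$ converging to $v$ must eventually lie in a single component $I^\alpha$, so $v \in \partial I^\alpha$. Therefore $\hat{E} = \bigsqcup_\alpha \hat{E}^\alpha$ with each summand connected, so the $\hat{E}^\alpha$ are precisely the components of $\hat{E}$, each bounding the unique $\hat{I}^\alpha$.
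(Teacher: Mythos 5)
Most of your argument is sound and, for the topological claims, follows a genuinely different route from the paper: you pass to affine cross-sections $I^\alpha\cap\{f=1\}$ of the cones, whereas the paper works with geodesically convex subsets of the sphere and a projection to $\mathbb{R}^n$. Your finiteness argument (compactness of $S$ plus the fact that $E$ is locally an embedded separating hypersurface, so nearby timelike vectors lie in a single component) is in substance the paper's tubular-neighborhood argument, and your derivation of sharpness of the \emph{open} cone $I^\alpha$ from convexity ($0=\tfrac12 v+\tfrac12(-v)$) is clean and gives $\hat I^\alpha\cong D^n$ correctly. The final paragraph on the bijection between components of $\hat I$ and of $\hat E$ is also fine. (The literal disjointness of components of $\hat E$ is automatic for components of any space; the substantive claim that distinct light cones do not touch is really what your last paragraph establishes, so your appeal to components of $J$ at the start is unnecessary and slightly off-target, though harmless.)

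The genuine gap is the one you flag yourself: sharpness of the \emph{closed} cone $J^\alpha$, which you need so that $f>0$ on $J^\alpha\backslash 0$ and the cross-section $J^\alpha\cap\{f=1\}$ is compact. Your sketch does not close. If $v$ and $-v$ both lie in $J^\alpha$ they must both be lightlike (else $0\in I^\alpha$), and then for $w\in I^\alpha$ the whole line $w+\mathbb{R}v$ lies in $J^\alpha$; expanding $2L(w+tv)=2t\,g_v(v,w)+g_v(w,w)+o(1)$ as $t\to+\infty$ and the analogous expansion as $t\to-\infty$ yields $g_v(v,w)<0$ and $g_{-v}(-v,w)<0$, i.e.\ $g_{-v}(v,w)>0$. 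Since $g_v$ and $g_{-v}$ are \emph{different} bilinear forms, these two sign conditions are not contradictory, and the Lorentzian signature of each form separately gives nothing more. In the paper sharpness of $J^\alpha$ is Corollary \ref{sha}, a consequence of the reverse Cauchy--Schwarz inequality; that inequality appears after this proposition but its proof uses only Theorem \ref{jod}, so the honest repair is to reorder: prove the reverse Cauchy--Schwarz inequality (or at least Corollary \ref{sha}) first and cite it here. Alternatively one can avoid sharpness of $J^\alpha$ by analyzing the boundary of the possibly unbounded convex cross-section through its recession cone (the recession directions supply the missing ``points at infinity'' of $\partial\hat I^\alpha$), but that is a separate argument which you would need to supply, not the one you sketched.
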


\begin{proof}
Let $\hat{E}^\alpha$ and $\hat{E}^\beta$ be two components of
$\hat{E}$. They cannot intersect because $\hat{E}$ is embedded. Also
they are finite in number. Indeed, suppose they are not. Then taking
for each $\alpha$ on the index set of components a representative
$p_\alpha\in \hat{E}^\alpha$ there is, by closure of $\hat{E}$ and
the compactness of $S$, some point $p\in \hat{E}$ and an infinite
subsequence $p_{\alpha(n)}$ such that $p_{\alpha(n)}\to p$. But
$p\in \hat{E}^\gamma$ for some $\gamma$, and since $\hat{E}^\beta
\subset \hat{E}$, and the latter is embedded, there is a tubular
neighborhood $\mathcal{T}$ of $\hat{E}^\gamma$ over which $L$
vanishes just on $\hat{E}^\gamma$, in contradiction with
$p_{\alpha(n)}\in \mathcal{T}$, $\alpha(n)\ne \gamma$, for
sufficiently large $n$.

Let us recall that every open bounded convex subset  of
$\mathbb{R}^{n}$ is homeomorphic to the  open ball $D^{n}$ and
has boundary homeomorphic to $S^{n-1}$. The same conclusion holds
for open geodesically convex subsets $C$ of $S^{n}$ such that
$\bar{C}\ne S^{n}$ (use the stereographic projection from some
point in $S^{n}\backslash \bar{C}$ so as to map the convex set to a
bounded convex subset of $\mathbb{R}^{n}$). In particular, the
boundary of $C$ is connected for $n\ge 2$.

By Beem and Perlick result each component $\hat{I}^\gamma$ of
$\hat{I}$  is geodesically convex (on $S$ with the usual metric
induced from the Euclidean space) and $\overline{\hat{I}^\gamma}$
cannot be the whole sphere because $(V\backslash 0)\backslash J\ne \emptyset$.
Thus $\hat{I}^\gamma$ is topologically an open ball $D^{n}$ and
the boundary $\p \hat{I}^\gamma$ is topologically a sphere
$S^{n-1}$, in particular it is connected for $n\ge 2$. The set $\p
\hat{I}^\gamma$ being a subset of the embedded manifold $\hat{E}$
admits a tubular neighborhood on $S$ which does not intersect
$\hat{E}\backslash \p \hat{I}^\gamma$, thus $\hat{E}^\gamma:=\p
\hat{I}^\gamma$ is actually a component of $\hat{E}$. That each
component of $\hat{E}$ is the boundary of some component of
$\hat{I}$ is immediate from $\hat{E}=\p \hat{I}$. $\square$
\end{proof}

\begin{remark}
Taking into account the previous result, the Lemma \ref{juz} can be
easily generalized to each component
\[
\p I^\alpha=\p J^\alpha=E^\alpha, \quad
\overline{I^\alpha}=J^\alpha, \quad \textrm{Int}\,
J^\alpha=I^\alpha,
\]
where we used the topology of $V\backslash 0$ induced from $V$.
\end{remark}

\begin{corollary}
Suppose that $\textrm{dim}\, V\ge 3$. The set $(V\backslash 0)\backslash J$ is a connected open subset of $V\backslash 0$ and analogously,
$\hat{N}:=((V\backslash 0)\backslash J)\cap S$ is a connected open subset of $S$.
\end{corollary}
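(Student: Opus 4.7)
The openness claims are immediate: $L\colon V\backslash 0\to\mathbb{R}$ is continuous
and $J=L^{-1}((-\infty,0])$ is closed in $V\backslash 0$, so its complement is open,
and intersecting with the open set $S$ shows $\hat N$ is open in $S$.

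My first reduction is to deduce the connectedness of $(V\backslash 0)\backslash J$
from that of $\hat N$. Positive homogeneity of $g$ of degree zero gives
$g_{sv}(sv,sv)=s^2 g_v(v,v)$ for every $s>0$, so both $J$ and its complement
in $V\backslash 0$ are positive cones. Hence
\[
(V\backslash 0)\backslash J=\{sv\colon s>0,\ v\in\hat N\}
\]
is the image of $(0,\infty)\times\hat N$ under the continuous scaling map
$(s,v)\mapsto sv$; if $\hat N$ is connected then so is the product and hence
its continuous image.

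The main task is therefore to show $\hat N$ is connected. The previous proposition
tells us that $\hat J=\hat I\cup\hat E$ splits into finitely many pairwise
disjoint components $\hat J^\alpha=\overline{\hat I^\alpha}=\hat I^\alpha\cup\hat E^\alpha$,
each of which is a topological closed $n$-ball (since $\hat I^\alpha\cong D^{n}$
has boundary $\hat E^\alpha\cong S^{n-1}$ in $S\cong S^{n}$, and the geodesic
convexity argument through stereographic projection used in that proposition
identifies $\overline{\hat I^\alpha}$ with the closure of a bounded convex set
in $\mathbb{R}^{n}$). So $\hat N$ is the complement in $S^{n}$ of a finite
disjoint union of embedded closed $n$-balls. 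I would then invoke the standard
topological fact that, for $n\ge 2$, such a complement is path connected.
Concretely, pick $p\in\hat N$ (non-empty, because $(V\backslash 0)\backslash J$
is non-empty by the first theorem of the section and is a positive cone) and
stereographically project from $p$ to identify $S\setminus\{p\}\cong\mathbb{R}^{n}$;
then $\hat J$ becomes a compact disjoint union of embedded closed $n$-balls in
$\mathbb{R}^{n}$, and any two points of the complement can be joined by a polygonal
path in $\mathbb{R}^{n}$ which, thanks to $n\ge 2$, can be perturbed transversally
to miss each ball (whenever the path would enter a closed ball, one detours along
its boundary sphere $\hat E^\alpha\cong S^{n-1}$, which is path connected precisely
because $n\ge 2$).

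The only nontrivial step is this last general-position argument, and the dimensional
hypothesis $\dim V\ge 3$ enters exactly there: for $n=1$ the boundary ``spheres''
are two-point sets and the detour step fails, consistently with the fact that
in two spacetime dimensions the spacelike sector can have several connected
components.
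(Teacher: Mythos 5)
Your proposal is correct and follows essentially the same route as the paper: both reduce the problem to the path-connectedness of $\hat N$ and deform an arbitrary path in $S$ around the finitely many disjoint convex "balls" $\hat J^\alpha$, using that their boundaries $\hat E^\alpha\cong S^{n-1}$ are path connected precisely when $n\ge 2$. Your additional remarks (the cone reduction from $(V\backslash 0)\backslash J$ to $\hat N$, the polygonal general-position argument, and the observation that a detour "along" $\hat E^\alpha$ must really be pushed slightly off that sphere into $\hat N$ via its tubular neighborhood) only make explicit steps the paper leaves implicit.
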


\begin{proof}
The components
$\hat{J}^\alpha=\overline{\hat{I}^\alpha}$ are convex subsets of $S$
which do not intersect. The boundaries $\p \hat{J}^\alpha$ have topology $S^{n-1}$, $n\ge 2$, and they are not only connected but also path connected \cite{willard70}. Taken any two points $p,q\in \hat{N}$ there is some path contained in $S$  connecting them (since $S$ is path connected). But this path can be easily deformed to pass around  each ball ` $\hat{J}^\alpha$', instead of crossing them, so as to give a path entirely contained in $ \hat{N}$. Thus $\hat{N}$ is path connected and hence connected. $\square$
\end{proof}
%
%
%
%
%
%

We return temporarily to the full manifold picture in order to
answer (v). The result is pretty intuitive but we provide a short proof.

\begin{proposition} \label{nax}
On the full manifold $(M,L)$ the number of components of $I_x$
is independent of $x$.
\end{proposition}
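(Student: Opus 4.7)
My plan is to apply Ehresmann's fibration theorem to the lightcone locus inside the unit tangent bundle. Fix any auxiliary Riemannian metric on $M$ and let $\pi\colon UM\to M$ be the associated unit tangent bundle, with fibers $S_x^n$. By positive homogeneity of $L$, the number of components of $I_x$ coincides with that of $\hat I_x:=I_x\cap S_x^n$, and by the previous Proposition (for $\dim V\ge 3$) this in turn equals the number of components of $\hat E_x:=E_x\cap S_x^n$. It therefore suffices to prove that the global set $\hat E:=\{(x,v)\in UM:L(x,v)=0\}$ fibers over $M$ as a locally trivial bundle, since then $\#\{\text{components of }\hat E_x\}$ is locally constant and hence, by connectedness of $M$, constant.

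The geometric content is that $\hat E$ is a smooth embedded hypersurface of $UM$ transverse to the fibers of $\pi$. The vertical component of $dL$ at $(x,v)\in UM$ is the restriction of the covector $g_v(v,\cdot)$ to $T_vS_x^n\cong v^\perp$, and this restriction vanishes exactly at the critical points of $L|_{S_x^n}$; by the Lagrange-multiplier computation used in the proof of the non-emptiness theorem above, any such critical point satisfies $g_v(v,\cdot)=\lambda\, v\cdot$ with $\lambda=2L(v)\ne 0$, so critical points never lie on $\hat E$. This simultaneously establishes that $\hat E$ is smooth of codimension one and that $\pi|_{\hat E}\colon \hat E\to M$ is a submersion. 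Properness is immediate: the fiber $S^n$ of $\pi$ is compact and $\hat E$ is closed in $UM$ as the zero set of a continuous function. Ehresmann's theorem then yields the desired local triviality, so all fibers $\hat E_x$ are diffeomorphic and in particular have the same number of components.

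The two-dimensional case $\dim V=2$ is handled by the same Ehresmann argument: $\hat E_x\subset S^1$ is then a finite subset at each of whose points $L$ changes sign (since the vertical derivative of $L$ is nonzero there), and the components of $\hat I_x$ are in bijection with the arcs of $S^1\setminus\hat E_x$ on which $L<0$, whose number is half the cardinality of $\hat E_x$ and is therefore again constant in $x$. The main technical step is the verification that the vertical component of $dL$ does not vanish on $\hat E$; this is where the non-degeneracy of $g$ together with the homogeneity identity $\partial L/\partial v^\mu=g_v(v,\cdot)_\mu$ enter in an essential way, and everything else follows by standard fiber-bundle machinery.
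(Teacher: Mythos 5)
Your proof is correct, but it takes a genuinely different route from the paper's. The paper argues purely locally: working in a trivialization $U\times\mathbb{R}^{n+1}$ near $x_0$, it takes disjoint tubular neighborhoods $\mathcal{T}^k$ of the components $\hat E^k(x_0)$ in the coordinate sphere and observes that for $x$ close to $x_0$ the sets $\hat E^k(x)$ stay trapped in the respective $\mathcal{T}^k$, so the count is locally constant; connectedness of $M$ finishes the argument. You instead globalize: you exhibit $\hat E=\{L=0\}\cap UM$ as an embedded hypersurface on which the restriction of $\pi$ is a proper submersion, and invoke Ehresmann to get a locally trivial fibration. The key technical point in your argument --- that the fiberwise differential $g_v(v,\cdot)\vert_{v^\perp}$ cannot vanish where $L=0$, because a Lagrange critical point forces $\lambda=2L(v)\neq 0$ --- is exactly the regularity fact underlying both Lemma \ref{juz} and the paper's tubular-neighborhood argument, so the two proofs share their essential input. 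Your version buys more: all fibers $\hat E_x$ are diffeomorphic, not merely equinumerous in components, and your explicit treatment of the reduction from components of $I_x$ to components of $\hat E_x$ (via the boundary bijection for $\dim V\ge 3$, and via the sign-alternation count $\#\hat I_x=\tfrac12\#\hat E_x$ for $\dim V=2$) is more careful than the paper's, which leaves that identification implicit. The costs are the heavier machinery (Ehresmann for a proper $C^1$ submersion, which does hold at the $C^{3,1}$ regularity assumed here, but deserves a word) and the need for an auxiliary Riemannian metric; the paper's local argument avoids both. One could also remark that for the mere constancy of the component count your full fibration statement is more than is needed.
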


\begin{proof}
Let $N(x_0)$ be the number of components of  $I_{x_0}$. It suffices
to show that this number is constant in an open neighborhood of
$x_0$. Locally near $x_0$ we can treat $TM$ as a product $U\times
\mathbb{R}^{n+1}$, with $U$ neighborhood of $x_0$.

Let $\mathcal{T}\subset S^{n}\subset \mathbb{R}^{n+1}$ be a tubular
neighborhood of $\hat{E}(x_0)$, then $\mathcal{T}$ decomposes into
the disjoint union of tubular neighborhoods $\mathcal{T}^k$ of
$\hat{E}^k(x_0)$, $k=1,\cdots, N(x_0)$. If $x$ is varied the sets
$\hat{E}^k(x)\subset S^{n}$ can change but they remain in the
respective disjoint sets $\mathcal{T}^k$, provided $x$ is
sufficiently close to $x_0$. Thus $N(x)$ does not depend on $x$ in a
neighborhood of $x_0$.

%

\end{proof}

\subsection{The reverse Cauchy-Schwarz and triangle inequalities}

Let us generalize the Cauchy-Schwarz inequality.

As mentioned in the introduction, the proofs will clarify that Eqs. (\ref{nun}) and (\ref{ccl}) also hold in Lorentz-Finsler theories with Lagrangians defined over conic subbundles.

\begin{theorem} [Finslerian reverse Cauchy-Schwarz inequality] Let $v_1,v_2\in
J^\alpha$ then
\begin{equation} \label{nun}
-g_{v_1}({v}_1,{v}_2)\ge \sqrt{-g_{v_1}({v}_1,{v}_1)}\,
\sqrt{-g_{v_2}({v}_2,{v}_2)},
\end{equation}
with equality if and only if $v_1$ and $v_2$ are proportional. In
particular, if $v_1,v_2\in J^\alpha$ then $g_{v_1}({v}_1,{v}_2)\le
0$ with equality  if and only if $v_1$ and $v_2$ are proportional
and lightlike.
\end{theorem}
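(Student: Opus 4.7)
The plan is to reduce the inequality to a concavity statement for the function
\[
\phi(t):=\sqrt{-2L(w(t))}, \qquad w(t):=(1-t)v_1+tv_2, \quad t\in[0,1].
\]
By Theorem~\ref{jod} the segment $w(t)$ lies in $J^\alpha$, and in fact in $I^\alpha$ for $t\in(0,1)$ unless $v_1$ and $v_2$ are both proportional to a common lightlike vector (in which case the inequality is trivial, with equality). I would first treat the main case $v_1,v_2\in I^\alpha$, so that $\phi$ is strictly positive and $C^2$ on $[0,1]$, and handle the lightlike boundary cases at the end by a perturbation argument.

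The core computation is the concavity of $\phi$. Using (\ref{koi}i) one gets $\phi\phi'=-g_{w(t)}(w(t),v_2-v_1)$. Differentiating once more and using that the Cartan tensor $\p g_{v\,\mu\nu}/\p v^\alpha$ is totally symmetric in the three indices and hence, by (\ref{koi}ii)--(iii), is annihilated by any contraction with $w^\mu(t)$, the $\p g$ contribution drops out and one obtains
\[
\phi\phi''+(\phi')^{2}=-g_{w(t)}(v_2-v_1,v_2-v_1).
\]
Now decompose $v_2-v_1=(\phi'/\phi)\,w(t)+\xi$ with $\xi$ $g_{w(t)}$-orthogonal to $w(t)$. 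Since $w(t)$ is $g_{w(t)}$-timelike, the restriction of $g_{w(t)}$ to $w(t)^{\perp}$ is positive definite, so $g_{w(t)}(\xi,\xi)\ge 0$. Substituting yields the clean identity
\[
\phi\phi''=-g_{w(t)}(\xi,\xi)\le 0,
\]
and since $\phi>0$ this gives $\phi''\le 0$, i.e.\ $\phi$ is concave on $[0,1]$.

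Concavity provides the tangent line bound $\phi(1)\le \phi(0)+\phi'(0)$. Writing $\phi(0)=\sqrt{-g_{v_1}(v_1,v_1)}$, $\phi(1)=\sqrt{-g_{v_2}(v_2,v_2)}$ and $\phi(0)\phi'(0)=-g_{v_1}(v_1,v_2)-g_{v_1}(v_1,v_1)$, multiplying through by $\phi(0)>0$ rearranges precisely to (\ref{nun}). Equality at $t=1$ forces $\phi''\equiv 0$, hence $\xi(t)\equiv 0$, hence $v_2-v_1$ is parallel to $w(t)$, hence $v_1\parallel v_2$; the common cone forces a positive proportionality constant. For the degenerate cases in which $v_1$ or $v_2$ is lightlike, I would perturb along a fixed $u\in I^\alpha$: by the strict convexity part of Theorem~\ref{jod}, $(1-\epsilon)v_i+\epsilon u\in I^\alpha$ for $\epsilon>0$ small, so the already-proven timelike case applies, and passing to the $\epsilon\to 0^+$ limit via continuity of $L$ and $g$ on $V\backslash 0$ yields (\ref{nun}) in general. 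Finally, the ``in particular'' statement is immediate: the right-hand side of (\ref{nun}) is non-negative, so $g_{v_1}(v_1,v_2)\le 0$, and equality forces both $\sqrt{-g_{v_1}(v_1,v_1)}\sqrt{-g_{v_2}(v_2,v_2)}=0$ (so one of the vectors is lightlike) and equality in the reverse Cauchy--Schwarz (so $v_1\parallel v_2$), which together force both vectors to be lightlike and positively proportional. The one step I expect to require real care is the concavity identity $\phi\phi''=-g_{w(t)}(\xi,\xi)$, which relies crucially on the total symmetry of the Cartan tensor and the contractions (\ref{koi}ii)--(iii); everything else is Lorentzian linear algebra on $(V,g_{w(t)})$ together with continuity.
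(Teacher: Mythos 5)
Your concavity route is sound for the core inequality and is genuinely different in execution from the paper's proof: the paper normalizes both vectors onto the hyperboloid $E^\alpha(1)$, invokes the already-proved convexity of $J^\alpha(1)$ (Theorem \ref{jod}) to get $2L((1-\alpha)\tilde v_1+\alpha\tilde v_2)\le -1$, and reads off the inequality from the \emph{first-order} Taylor coefficient, whereas you re-derive from scratch the stronger statement that $\phi=\sqrt{-2L}$ is concave along the segment via the second-derivative identity $\phi\phi''=-g_{w(t)}(\xi,\xi)$. That identity is correct (the Cartan-tensor terms do drop out by (\ref{koi})), and what it buys you is a cleaner, quantitative statement from which the tangent-line bound immediately gives (\ref{nun}) and the timelike equality case. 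One computational slip: $\phi(0)\phi'(0)=-g_{v_1}(v_1,v_2-v_1)=-g_{v_1}(v_1,v_2)+g_{v_1}(v_1,v_1)$, with a plus sign on the last term; with your sign the rearrangement does not produce (\ref{nun}), with the correct sign it does.

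The genuine gap is the equality case when $v_1$ or $v_2$ is lightlike. Your perturbation $v_i\mapsto(1-\epsilon)v_i+\epsilon u$ and the limit $\epsilon\to 0^+$ deliver only the non-strict inequality on $J^\alpha$; equality characterizations do not survive such limits. You then use this missing case circularly in the ``in particular'' step, where you invoke ``equality in the reverse Cauchy--Schwarz (so $v_1\parallel v_2$)'' precisely when one vector is lightlike --- which is the case your argument never covers. The theorem's claim here is nontrivial: if $v_1\in E^\alpha$ and $g_{v_1}(v_1,v_2)=0$, one must still rule out that $v_2-v_1$ is $g_{v_1}$-spacelike. The paper does this with a second-order Taylor expansion of $2L((1-\alpha)v_1+\alpha v_2)\le 0$ at $\alpha=0$; in your language, note that $f:=\phi^2=-2L(w(t))$ satisfies $f\ge 0$, $f(0)=0$ and $f'(0)=-2g_{v_1}(v_1,v_2)=0$, so $f''(0)=-2g_{v_1}(v_2-v_1,v_2-v_1)\ge 0$, and since $v_2-v_1$ is $g_{v_1}$-orthogonal to the lightlike $v_1$ it is either $g_{v_1}$-spacelike (excluded by this sign) or proportional to $v_1$. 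Adding that half-page closes the argument; without it the ``if and only if'' parts of both assertions are unproved.
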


 This result answers question
(vii) in the negative.
We stress that if $g_{v_1}(v_2,v_2)\le 0$, then a similar inequality holds with $g_{v_2}$ replaced by $g_{v_1}$ in the last term appearing in Eq.\ (\ref{nun}). This is the well known Lorentzian reverse Cauchy-Schwarz inequality \cite{hawking73} for the
metric $g_{v_1}$.

\begin{remark}
By our definition of $J^\alpha$, the theorem does not contemplate the case in which $v_1$ or $v_2$  vanish. Nevertheless, the inequality can be trivially extended by continuity to the case $v_1=0$ or $v_2=0$. It suffices to notice that $g_{a}(a,b)$ is continuous in $(a,b)$, also at $a=0$   provided it is defined to vanish there (Remark \ref{ret}).
\end{remark}


\begin{remark}
The Finslerian reverse Cauchy-Schwarz inequality can be expressed in the following form: for every  $v_1\in I^\alpha$, $v_2\in
J^\alpha$,  we have
\[
v_2^\mu \p_{v_1^\mu} \sqrt{-g_{v_1}(v_1,v_1)}\ge \sqrt{-g_{v_2}(v_2,v_2)}.
\]
with equality if and only if $v_2$ is proportional to $v_1$.
\end{remark}

\begin{proof}
Let $v_1,v_2\in I^\alpha$ and let us
define
\[
\tilde{v}_1=v_1/\sqrt{-2L(v_1)}, \qquad
\tilde{v}_2=v_2/\sqrt{-2L(v_2)},
\]
so that $2L(\tilde{v}_1)=2L(\tilde{v}_2)=-1$, that is $v_1,v_2\in
E^\alpha(1)$. Because of the convexity of $J^\alpha(1)$
\[
2L((1-\alpha)\tilde{v}_1+\alpha \tilde{v}_2)\le -1, \quad \alpha \in
[0,1]
\]
thus
\[
2L(\tilde{v}_1+\frac{\alpha}{{1-\alpha}}\, \tilde{v}_2)\le
-\frac{1}{(1-\alpha)^2}\le -1, \quad \alpha \in [0,1).
\]
Now, let us Taylor expand the $C^3$ function of $\alpha$,
$2L((1-\alpha) \tilde{v}_1+\alpha \tilde{v}_2)$ at $0$
\[
2L((1-\alpha) \tilde{v}_1+\alpha \tilde{v}_2)=-1+\alpha 2
g_{v_1}(\tilde{v}_1,\tilde{v}_2-\tilde{v}_1)+\alpha^2
g_{v_1}(\tilde{v}_2-\tilde{v}_1,\tilde{v}_2-\tilde{v}_1)+o(\alpha^2).
\]
Since, by strict convexity, it is smaller than $-1$ for $\alpha\in (0,1)$, the first order
expansion already implies
\begin{equation} \label{nos}
g_{v_1}(\tilde{v}_1,\tilde{v}_2)\le
g_{v_1}(\tilde{v}_1,\tilde{v}_1).
\end{equation}
Observe that equality holds only if
$g_{v_1}(\tilde{v}_1,\tilde{v}_2-\tilde{v}_1)=0$ namely
$\tilde{v}_2-\tilde{v}_1$ is $g_{v_1}$-orthogonal to the
$g_{v_1}$-timelike vector $\tilde{v}_1$, hence
$\tilde{v}_2-\tilde{v}_1$ is $g_{v_1}$-spacelike or zero. But again,
 the Taylor expansion at the second order shows that
$\tilde{v}_2-\tilde{v}_1=0$ thus $v_1$ and $v_2$ are proportional.

Equation (\ref{nos}) reads  equivalently: for $v_1,v_2\in I^\alpha$
\[
-g_{v_1}({v}_1,{v}_2)\ge \sqrt{-g_{v_1}({v}_1,{v}_1)}\,
\sqrt{-g_{v_2}({v}_2,{v}_2)},
\]
with equality if and only if $v_1$ and $v_2$ are proportional.

By continuity this inequality holds for $v_1,v_2\in
\overline{I^\alpha}=J^\alpha$. Clearly, if $v_1$ and $v_2$ are proportional the
inequality becomes an equality. In order to study the equality case,
we can assume that $v_1$ or $v_2$ is lightlike, since the case in
which they are both timelike has been already shown above to lead to
the proportionality of these vectors. Thus suppose that $v_1$ is
lightlike, and let us Taylor expand the $C^3$ function of $\alpha$,
$2L((1-\alpha) {v}_1+\alpha v_2)$ at $0$
\[
0\ge 2L((1-\alpha){v}_1+\alpha {v}_2)=\alpha 2
g_{v_1}({v}_1,{v}_2-{v}_1)+\alpha^2
g_{v_1}({v}_2-{v}_1,{v}_2-{v}_1)+o(\alpha^2),
\]
where the first inequality is due to the convexity of  $J^\alpha$.
Since we are assuming equality in Eq.\ (\ref{nun})
$g_{v_1}({v}_1,{v}_2)=0$, thus $v_2$ is either proportional to the
lightlike vector $v_1$ or it is $g_{v_1}$-spacelike. Furthermore,
the previous inequality simplifies to
\[
0\ge  \alpha^2 g_{v_1}({v}_2,{v}_2)+o(\alpha^2),
\]
thus $v_2$ cannot be $g_{v_1}$-spacelike and hence must be
proportional to $v_1$ as we wished to prove. $\square$
\end{proof}

We are now able to answer  question (v).

\begin{corollary} \label{sha}
If $v_1,v_2\in J^\alpha$ and $v_1=a v_2$, $a\in \mathbb{R}$, then
$a> 0$. In other words, $J^\alpha\cup \{0\}$ is a convex cone which does not
contain any line passing through the origin (sharp cone).
\end{corollary}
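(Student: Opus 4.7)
The plan is to read both claims off from the convexity of $J^\alpha$ established in Theorem \ref{jod} together with the fact that, by construction, $J^\alpha \subset V\setminus 0$, so $0 \notin J^\alpha$.

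I first dispose of $a=0$, which would force $v_1=0\notin J^\alpha$. For $a<0$ I form the convex combination
\[
w(t) := (1-t)\,v_1 + t\,v_2 = \bigl((1-t)\,a + t\bigr)\,v_2, \qquad t\in[0,1].
\]
The scalar coefficient equals $a<0$ at $t=0$ and $1$ at $t=1$, so it vanishes at the interior parameter $t^\ast = a/(a-1)\in(0,1)$. Convexity of $J^\alpha$ forces the whole chord $w([0,1])$ to lie in $J^\alpha$, hence in particular $0=w(t^\ast)\in J^\alpha$, contradicting $J^\alpha\subset V\setminus 0$. This yields the first assertion $a>0$.

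For the second assertion, I would observe that $J^\alpha$ is preserved by the homothety $v\mapsto sv$ for $s>0$ (as noted in the discussion preceding Lemma \ref{juz}), so $J^\alpha\cup\{0\}$ is closed under nonnegative scaling and is therefore a cone; its convexity is inherited from that of $J^\alpha$, the only new combinations (those involving $0$) being absorbed by the cone property. Sharpness then follows immediately from the first part, applied with $v_1=-v$, $v_2=v$ and proportionality constant $a=-1$: for any $v\in J^\alpha$ one has $-v\notin J^\alpha$, and since $-v\neq 0$ also $-v\notin J^\alpha\cup\{0\}$, which prevents any line through the origin from lying in the cone.

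There is no substantive obstacle here; the whole argument rests on the tension between the convexity of $J^\alpha$ and the exclusion of the origin in the definition of $J$. The only point worth pausing over is confirming that the convexity in Theorem \ref{jod} is genuine $V$-convexity, not merely convexity along chords avoiding $0$; this is indeed the case, since $J^\alpha$ was produced as the closure of the honestly convex set $I^\alpha$ in the ambient topological vector space.
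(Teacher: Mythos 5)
Your argument for the case $a<0$ rests entirely on the assertion that the chord $w([0,1])$ between $v_1$ and $v_2=v_1/a$ lies in $J^\alpha$, and you correctly flag this as the one point ``worth pausing over'' --- but your resolution of it begs the question. What the proof of Theorem \ref{jod} actually delivers for $c=0$ is the convexity of the closure of $I^\alpha$ taken in the ambient space $V$, i.e.\ of $J^\alpha\cup\{0\}$; passing from that to convexity of $J^\alpha$ itself (a subset of $V\backslash 0$) is exactly the statement that no chord between points of $J^\alpha$ meets the origin, which is equivalent to the sharpness you are trying to prove. Your closing justification --- that $J^\alpha$ is the closure of the honestly convex set $I^\alpha$ in the ambient vector space --- does not repair this: since $I^\alpha$ is a cone, its closure in $V$ contains $0$, and deleting the origin from a closed convex cone destroys convexity precisely when that cone contains a full line (an open half-space is convex and misses the origin, yet its closure minus the origin is not convex). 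Nothing established before Corollary \ref{sha} excludes this possibility; indeed the footnote to Theorem \ref{jod} explicitly defers the ``same orientation'' question for antipodal lightlike pairs to Corollary \ref{sha}, so the convexity statement there cannot be invoked in the strong form your chord argument needs.

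The paper closes the gap by a different and non-circular route: the strict case of the Finslerian reverse Cauchy--Schwarz inequality. Choosing any $v\in I^\alpha$ with $v\ne v_1,v_2$, the equality case of that inequality cannot occur because $v$ is timelike rather than lightlike, so $g_v(v,v_1)<0$ and $g_v(v,v_2)<0$; then $g_v(v,v_1)=a\,g_v(v,v_2)$ forces $a>0$. If you insist on a purely convex-geometric proof you would first have to show independently that $\overline{I^\alpha}$ contains no line through the origin --- say, from the strict convexity of $J^\alpha(1)$ and an analysis of its recession cone --- which is substantially more work than the one-line Cauchy--Schwarz argument.
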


\begin{proof}
Let us take $v\in J^\alpha$, in such a way that $v\ne v_1,v_2$ and
$v\in I^\alpha$, then by the Finslerian reverse Cauchy-Schwarz
inequality $g_v(v,v_1), g_v(v,v_2)<0$, from which the desired
conclusion follows. $\square$
\end{proof}

\begin{theorem} [Finslerian reverse triangle inequality]
Let $v_1,v_2\in J^\alpha$ then defined $v=v_1+v_2$, we have $v\in
J^\alpha$ and
\begin{equation} \label{ccl}
\sqrt{-g_{v}(v,v)}\ge
\sqrt{-g_{v_1}({v}_1,{v}_1)}+\sqrt{-g_{v_2}({v}_2,{v}_2)}.
\end{equation}
with equality if and only if $v_1$ and $v_2$ are proportional. In
particular,  if $v_1$ is timelike and $v_2$ is causal then $v$ is
timelike.
\end{theorem}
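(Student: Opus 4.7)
The plan is to deduce the reverse triangle inequality essentially as a corollary of the reverse Cauchy--Schwarz inequality (\ref{nun}), with the earlier convexity results handling the preliminary claim that $v\in J^\alpha$.

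First I would establish that $v=v_1+v_2\in J^\alpha$, and moreover that $v\in I^\alpha$ as soon as one of $v_1,v_2$ is timelike. The midpoint $\tfrac12(v_1+v_2)$ lies in $J^\alpha$ by the convexity statement in Theorem \ref{jod}, and by Corollary \ref{sha} the set $J^\alpha\cup\{0\}$ is a sharp convex cone, so $v=2\cdot\tfrac12(v_1+v_2)\in J^\alpha\cup\{0\}$. Sharpness rules out $v_1+v_2=0$, for this would give $v_2=-v_1\notin J^\alpha$. The refined clause of Theorem \ref{jod} says that the midpoint lands in $I^\alpha$ unless both $v_1,v_2$ are proportional to a common lightlike vector; if either is timelike this exception is ruled out, so $\tfrac12(v_1+v_2)\in I^\alpha$ and hence $v\in I^\alpha$ by the cone property. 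This already proves the final sentence of the theorem.

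Having $v\in J^\alpha$ in hand, I apply the Finslerian reverse Cauchy--Schwarz inequality to the two pairs $(v,v_1)$ and $(v,v_2)$ in $J^\alpha\times J^\alpha$, obtaining
\[
-g_v(v,v_i)\ \ge\ \sqrt{-g_v(v,v)}\,\sqrt{-g_{v_i}(v_i,v_i)},\qquad i=1,2.
\]
Summing these and using linearity $-g_v(v,v_1)-g_v(v,v_2)=-g_v(v,v_1+v_2)=-g_v(v,v)$ yields
\[
-g_v(v,v)\ \ge\ \sqrt{-g_v(v,v)}\,\Bigl[\sqrt{-g_{v_1}(v_1,v_1)}+\sqrt{-g_{v_2}(v_2,v_2)}\Bigr].
\]
If $v\in I^\alpha$, then $\sqrt{-g_v(v,v)}>0$ and division produces (\ref{ccl}). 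If instead $v\in E^\alpha$ is lightlike, the first step forces both $v_1,v_2$ to be lightlike (otherwise $v$ would be timelike), so both sides of (\ref{ccl}) vanish and the inequality holds trivially (with equality).

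For the equality case, when $v$ is timelike the only way for the summed inequality to be saturated is to have equality in each of the two Cauchy--Schwarz applications; the equality clause of (\ref{nun}) then gives $v\propto v_1$ and $v\propto v_2$, hence $v_1\propto v_2$. In the lightlike regime the proportionality of $v_1$ and $v_2$ was already forced in the first step (both are lightlike and lie, together with $v$, on a common ray by the exceptional clause of Theorem \ref{jod}). I do not expect serious obstacles: the only delicate point is avoiding division by zero when $g_v(v,v)=0$, and this is resolved by observing that in that case the right-hand side of (\ref{ccl}) is forced to vanish as well, so the inequality degenerates to $0\ge 0$.
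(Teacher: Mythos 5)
Your proposal is correct and follows essentially the same route as the paper: obtain $v\in J^\alpha$ from convexity of the cone, dispose of the degenerate case where $v_1,v_2$ are proportional lightlike vectors (so that otherwise $v\in I^\alpha$ by Theorem \ref{jod}), then split $-g_v(v,v)=-g_v(v,v_1)-g_v(v,v_2)$, apply the reverse Cauchy--Schwarz inequality to each term, and divide by $\sqrt{-g_v(v,v)}$, with the equality case traced back to the equality clause of (\ref{nun}). The only difference is that you spell out the sharpness and midpoint arguments that the paper compresses into ``since $J^\alpha$ is a convex cone.''
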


\begin{proof}

Let $v_1,v_2\in J^\alpha$,  since $J^\alpha$ is a convex cone $v\in
J^\alpha$. If $v_1$ and $v_2$ are proportional and lightlike then the
triangle inequality is obvious and it is actually an equality. If
$v_1$ and $v_2$ are not proportional to the same lightlike vector,
then by Theorem \ref{jod} $v\in I^\alpha$. In this case let us divide
the identity
\begin{equation} \label{dol}
-g_{v}(v,v)=-g_{v}(v,v_1)-g_{v}(v,v_2) ,
\end{equation}
by $\sqrt{-g_{v}(v,v)}$. Using Eq.\ (\ref{nun})
\begin{align*}
\sqrt{-g_{v}(v,v)}&=-\frac{g_{v}({v},{v}_1)}{\sqrt{-g_{v}(v,v)}}-\frac{g_{v}({v},{v}_2)}{\sqrt{-g_{v}(v,v)}}\\
&\ge \sqrt{-g_{v_1}({v}_1,{v}_1)}+\sqrt{-g_{v_2}({v}_2,{v}_2)} .
\end{align*}
Observe that if equality holds then the reverse Cauchy-Schwarz
inequalities which we used should hold with the equality sign, thus
$v_1$ and $v_2$ are both proportional to $v$ and hence among
themselves. $\square$
\end{proof}

\subsection{The Legendre map}

\begin{definition}
The $C^{2,1}$ map $\ell\colon V\backslash0 \to
V^*\backslash0$ given by
\[v \to g_v(v,\cdot)=\p L/\p
v,\] is called {\em Legendre map}. The extension $\ell\colon V \to
V^*$ obtained defining $\ell(0):=0$ is given the same name.
\end{definition}

We remark that the extension is Lipschitz at the origin because
$g_v$ is bounded on the unit sphere $S\subset V$ and hence, by
homogeneity, all over the unit ball (Remark \ref{ret} and \cite[Rem.\ 1.3]{minguzzi13d}).

We start giving a partial affirmative answer to question (viii) in any dimension.

\begin{theorem}[On-shell injectivity of the Legendre map]
\label{aou}
 Let $v_1,v_2\in J^\alpha$ and suppose that
\[
\ell(v_1)=\ell(v_2)
\]
then $v_1=v_2$.
\end{theorem}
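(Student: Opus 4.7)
The plan is to combine the hypothesis $\ell(v_1)=\ell(v_2)$, contracted against $v_1$ and $v_2$, with the reverse Cauchy--Schwarz inequality applied to the pair $(v_1,v_2)\in J^\alpha$ twice, once with basepoint $v_1$ and once with basepoint $v_2$. Contracting the hypothesis against $v_1$ and $v_2$ gives the two scalar identities
\[
g_{v_1}(v_1,v_1)=g_{v_2}(v_2,v_1), \qquad g_{v_1}(v_1,v_2)=g_{v_2}(v_2,v_2).
\]
Setting $a:=\sqrt{-g_{v_1}(v_1,v_1)}$ and $b:=\sqrt{-g_{v_2}(v_2,v_2)}$, the two reverse Cauchy--Schwarz inequalities $-g_{v_1}(v_1,v_2)\ge ab$ and $-g_{v_2}(v_2,v_1)\ge ab$ translate, via the identities above, into the purely numerical constraints $b^2\ge ab$ and $a^2\ge ab$.

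In the generic case $a,b>0$ (both vectors timelike) these yield $a\ge b$ and $b\ge a$, hence $a=b$, and both reverse Cauchy--Schwarz inequalities are saturated. The equality clause then gives $v_2=\lambda v_1$ for some $\lambda\neq 0$, with $\lambda>0$ by Corollary~\ref{sha} (sharpness of $J^\alpha$). The degree-one positive homogeneity of the Legendre map gives $\ell(v_2)=\lambda\ell(v_1)$; since $\ell(v_1)\neq 0$ (indeed $\ell(v_1)(v_1)=-a^2<0$), the hypothesis $\ell(v_1)=\ell(v_2)$ forces $\lambda=1$, i.e.\ $v_1=v_2$.

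If instead at least one of the vectors is lightlike, say $a=0$, then the first identity above gives $g_{v_2}(v_2,v_1)=-a^2=0$, which is equality in the second reverse Cauchy--Schwarz inequality. The equality clause again forces $v_2=\lambda v_1$ with $\lambda>0$, so $v_2$ is likewise lightlike; in particular the mixed subcase $a=0$, $b>0$ automatically cannot occur. Non-degeneracy of $g_{v_1}$ ensures $\ell(v_1)\neq 0$ (a nonzero vector cannot be $g_{v_1}$-orthogonal to everything), and the same argument as before yields $\lambda=1$.

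The only delicate point is the lightlike subcase, where the quadratic inequalities $a^2\ge ab$, $b^2\ge ab$ collapse to $0\ge 0$ and carry no quantitative information. The argument must there route through the equality clause of the reverse Cauchy--Schwarz inequality, together with non-degeneracy of the fundamental tensor, to extract proportionality and pin down the constant; everything else is a clean bookkeeping exercise on the two numerical constraints.
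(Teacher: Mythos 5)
Your proposal is correct and follows essentially the same route as the paper: both arguments split into the lightlike case (handled via the equality clause of the ``in particular'' part of the reverse Cauchy--Schwarz inequality, which forces proportionality to a lightlike vector) and the timelike case (where contracting $\ell(v_1)=\ell(v_2)$ against both vectors and applying reverse Cauchy--Schwarz twice yields $a=b$ and saturation, hence proportionality), with Corollary~\ref{sha} fixing the sign of the proportionality constant. The only cosmetic difference is the final step: the paper deduces $v_1=\pm v_2$ from the equality of the two norms, while you use degree-one homogeneity of $\ell$ together with $\ell(v_1)\neq 0$; both are valid.
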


\begin{proof}
Let $v_1,v_2\in J^\alpha$ be such that
$\ell(v_1)=\ell(v_2)$.
We can assume
$v_1,v_2\ne 0$ otherwise the conclusion is obvious from the
non-degeneracy of $g_v$. If $v_1$ is lightlike then $g_{v_2}(v_2,v_1)=0$ thus by Theorem \ref{nun} and Cor.\ \ref{sha}, $v_2=av_1$, $a>0$. Thus $\ell(v_1)=\ell(v_2)$ and the non-degeneracy of $g$ imply $a=1$, and finally $v_2=v_1$.

So we can assume $v_1,v_2\in I^\alpha$. Using the reverse Cauchy-Schwarz inequality
we obtain
\begin{equation} \label{huj}
-g_{v_1}({v}_1,v_1)=-g_{v_2}({v}_2,v_1)\ge
\sqrt{-g_{v_1}({v}_1,v_1)} \,\sqrt{-g_{v_2}({v}_2,v_2)},
\end{equation}
thus $\sqrt{-g_{v_1}({v}_1,v_1)}\ge  \sqrt{-g_{v_2}({v}_2,v_2)}$ and
exchanging the roles of $v_1$ and $v_2$,
\begin{equation}  \label{suh}
{g_{v_1}({v}_1,v_1)}=  {g_{v_2}({v}_2,v_2)}.
\end{equation}
Thus
\[
-g_{v_2}({v}_2,v_1)=-g_{v_1}({v}_1,v_1)=\sqrt{-g_{v_2}({v}_2,v_2)}\,
\sqrt{-g_{v_1}({v}_1,v_1)}
\]
which shows that the reverse Cauchy-Schwarz inequality used in Eq.\
(\ref{huj}) holds with the equality sign thus $v_1\propto v_2$,
which by  Eq.\ (\ref{huj}) implies $v_1=\pm v_2$ and by Cor.\
\ref{sha} $v_1=v_2$. $\square$
\end{proof}

We are actually  able to answer affirmatively question (viii).

\begin{theorem}[The Legendre map is a diffeomorphism for $n\ge 2$ ] {}\\
\label{inj}
 Suppose that $\textrm{dim}\, V\ge 3$, then the map
$\ell\colon V\backslash0 \to V^*\backslash0$ defined by
\[
v \mapsto g_v(v,\cdot)=\p L/\p v
\]
is a bijection and hence a $C^{2,1}$-diffeomorphism. Its extension
to the whole $V$ is a Lipeomorphism (locally Lipschitz homeomorphism with locally Lipschitz inverse).
\end{theorem}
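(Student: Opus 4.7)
The plan is to realize $\ell\colon V\setminus 0 \to V^*\setminus 0$ as a proper local $C^{2,1}$-diffeomorphism and then conclude that it is a global diffeomorphism via a covering-space argument that exploits the simple connectedness of $V^*\setminus 0$ when $\dim V\geq 3$.

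First I would observe that $\ell$ is a local $C^{2,1}$-diffeomorphism on $V\setminus 0$: its Jacobian is
\[
\frac{\partial \ell_\mu}{\partial v^\rho} \;=\; \frac{\partial^2 L}{\partial v^\mu \partial v^\rho} \;=\; g_{v\,\mu\rho},
\]
which is non-degenerate by hypothesis, so the classical inverse function theorem applies. Next I would establish properness. By Eq.\ (\ref{njr}) the map $\ell$ is positively homogeneous of degree one, $\ell(sv)=s\,\ell(v)$ for $s>0$. Since $\ell$ is continuous and nowhere vanishing on the compact coordinate unit sphere $S\subset V$, there exist constants $0<r\leq R<\infty$ such that
\[
r\,\Vert v\Vert \;\leq\; \Vert \ell(v)\Vert_{*} \;\leq\; R\,\Vert v\Vert \qquad \text{for every } v\in V\setminus 0,
\]
where $\Vert\cdot\Vert_{*}$ is the dual norm. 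These bounds immediately give both that preimages of compact subsets of $V^*\setminus 0$ are compact (properness) and that the extension $\ell(0):=0$ is bi-Lipschitz in a neighborhood of the origin.

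A proper continuous local homeomorphism into a locally compact Hausdorff target is a covering onto its image: the image is open because $\ell$ is a local homeomorphism, and closed because if $\ell(v_n)\to\xi\neq 0$ then properness extracts a subsequence $v_{n_k}\to v$ with $\ell(v)=\xi$ by continuity. Connectedness of $V^*\setminus 0$ then forces surjectivity, so $\ell\colon V\setminus 0 \to V^*\setminus 0$ is a covering map. Because $\dim V\geq 3$, the source $V\setminus 0$ is connected and the target $V^*\setminus 0$ is simply connected (being homotopy equivalent to $S^{n}$ with $n\geq 2$); hence the covering has a single sheet and $\ell$ is a global $C^{2,1}$-diffeomorphism. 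Combined with the bi-Lipschitz control near the origin, this shows the extension $\ell\colon V\to V^*$ is a Lipeomorphism.

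The main obstacle I expect is verifying that $\ell$ is genuinely a covering rather than merely a local homeomorphism, for only then does the lifting criterion for simply connected bases apply. This reduction rests on properness, which in turn uses positive homogeneity of $\ell$ together with the compactness of $S$. The on-shell injectivity from Theorem \ref{aou} is not invoked in this topological approach, but it serves as a consistency check: it asserts exactly that a putative second sheet of the covering cannot fail within a fixed causal component $J^\alpha$.
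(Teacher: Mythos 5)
Your argument is correct and rests on the same topological pivot as the paper's --- a covering-space argument that is trivialized by the simple connectedness available once $\dim V\ge 3$ --- but you implement the covering claim differently. The paper quotients by the homothety action, obtaining an induced map $\tilde\ell\colon Q\to Q^*$ between compact manifolds of the same dimension, and invokes the standard fact that a local diffeomorphism between closed manifolds is a covering; injectivity and surjectivity of $\ell$ itself are then recovered from $\tilde\ell$ by a short homogeneity computation. You instead stay on $V\setminus 0\to V^*\setminus 0$ and substitute properness for compactness, deriving the two-sided bound $r\Vert v\Vert\le\Vert\ell(v)\Vert_*\le R\Vert v\Vert$ from positive homogeneity and the compactness of $S$, and then using the fact that a proper local homeomorphism onto a connected, locally compact Hausdorff space is a covering. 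Both routes consume exactly the same inputs (non-degeneracy of $g_v$, homogeneity of $\ell$, and $\pi_1(S^n)=0$ for $n\ge 2$), so neither is more general; yours avoids the quotient construction at the cost of proving properness, while the paper's avoids properness at the cost of descending to and re-ascending from the quotient. One small imprecision: your two-sided bound controls only $\Vert\ell(v)\Vert_*$ in terms of $\Vert v\Vert$, i.e.\ distances to the origin, so it does not by itself give the bi-Lipschitz property of the extension near $0$; for that you should instead note (as the paper does in the remark preceding Theorem \ref{aou}) that the differentials $g_v$ and $g_p$ are bounded on the punctured unit balls by homogeneity, which bounds the Lipschitz constants of $\ell$ and $\ell^{-1}$ there.
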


\begin{remark} \label{mdl}
This result holds with $(V,L)$ general pseudo-Minkowski space
because in the proof we do not use the signature of $g_v$.
\end{remark}

\begin{proof}
Let us introduce the equivalence relation on $V\backslash 0$ ``$v_1\sim v_2$ if
there is $s>0$ such that $v_1=s v_2$", namely let us regard $V\backslash 0$ as a
radial bundle over a base $Q$ diffeomorphic to $S^{n}$.
Analogously, let us introduce the quotient  $Q^*$ of $V^*\backslash 0$ with
respect to the radial directions. The map $\ell$ satisfies
$\ell(sv)=s\ell(v)$ for every $s>0$, thus it passes to the quotient
to a map $\tilde\ell\colon Q \to Q^*$.

The map $\ell$ is a local diffeomorphism because the Jacobian $\p
\ell/\p v=\textrm{Hess} L=g_v$ is non-singular. As a consequence,
$\tilde\ell$ has a Jacobian having maximal rank (i.e. $n$), thus
it is also a local diffeomorphism. Since $Q$ and $Q^*$ are closed
manifolds with the same dimension, $\tilde\ell$ is actually a
covering \cite{dubrovin85}. Since $Q^*\sim S^{n}$ is simply
connected (here we use $n\ge 2$) and $Q\sim S^{n}$ is connected
this covering is actually a homeomorphism. Thus if $v_1,v_2$ are
such that $\ell(v_1)=\ell(v_2)$ we have
$\tilde{\ell}([v_1])=\tilde\ell([v_2])$ then $[v_1]=[v_2]$ and hence
$v_1=s v_2$, and from $\ell(v_2)=\ell(v_1)=s \ell(v_2)$ we have that
$s=1$, that is $v_1=v_2$, namely $\ell$ is injective. Chosen $p\in V^*$
there is $v\in V$ such that $\tilde\ell([v])=[p]$, namely
$\ell(v)=sp$ for some $s>0$, but then $\ell(v/s)=p$, thus $\ell$ is
surjective. $\square$
\end{proof}


\begin{definition}
A hyperplane  $W\subset V$ passing through the origin is $\alpha$-{\em spacelike}
($\alpha$-{\em null}) if $W\cap J^\alpha=\emptyset$  (resp.\ $W\cap
I^\alpha=\emptyset$ and $W\cap J^\alpha\ne \emptyset$). It is $\alpha$-timelike if it is neither
$\alpha$-spacelike nor $\alpha$-null.
\end{definition}

\begin{proposition} \label{bbg}
Let $W$ be a $\alpha$-spacelike hyperplane, then up to a positive
proportionality constant there is one and only one $u\in J^\alpha$
such that
\begin{equation} \label{aop}
W=\{v\in V\colon g_{u}(u,v)=0\},
\end{equation}
and this vector can be normalized so that $g_u(u,u)=-1$.

Let $W$ be a $\alpha$-null hyperplane, then up to a positive
proportionality constant there is one and only one $u\in J^\alpha$
such that Eq.\ (\ref{aop}) holds. This vector belongs to $E^\alpha$.

Conversely, for every $u\in J^\alpha$ the hyperplane given by Eq.\
(\ref{aop}) is  $\alpha$-spacelike or $\alpha$-null (by the previous
statement, spacelike iff $u$ is timelike or null iff $u$ is lightlike).
\end{proposition}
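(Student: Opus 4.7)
I will treat the three parts separately.

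For the \emph{converse} (third paragraph of the statement), take $u\in J^\alpha$ nonzero and set $W=\{v\colon g_u(u,v)=0\}$. The Finslerian reverse Cauchy-Schwarz inequality gives $g_u(u,w)\le 0$ for every $w\in J^\alpha$, with equality if and only if $u$ and $w$ are both lightlike and proportional. When $u\in I^\alpha$ the equality case is vacuous, so $W$ meets $J^\alpha$ only at $0$ and is therefore $\alpha$-spacelike. When $u\in E^\alpha$ we have $u\in W\cap J^\alpha$ while $W\cap I^\alpha=\emptyset$, so $W$ is $\alpha$-null.

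For \emph{existence} I would exploit the Legendre map. Assuming $\dim V\ge 3$, Theorem \ref{inj} makes $\ell\colon V\setminus 0\to V^*\setminus 0$ a homeomorphism, positively homogeneous of degree one, so it descends to a homeomorphism $\tilde\ell$ between the ray spheres of $V$ and $V^*$, each $\simeq S^n$. Let $\hat J^\alpha$ denote the image of $J^\alpha$ in the ray sphere of $V$, a closed topological $n$-ball with boundary $\hat E^\alpha$. Let $P_\alpha$ denote the image in the ray sphere of $V^*$ of the polar cone $\{\omega\in V^*\colon \omega\le 0\text{ on }J^\alpha\}$; since $J^\alpha\cup\{0\}$ is a sharp solid convex cone (Corollary \ref{sha}), its polar is one too, and $P_\alpha$ is itself a closed topological $n$-ball. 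Reverse Cauchy-Schwarz gives $\tilde\ell(\hat J^\alpha)\subseteq P_\alpha$. I next identify the boundaries. For $[v]\in\hat E^\alpha$, $\ell(v)(v)=0$ and $\ell(v)\le 0$ on $J^\alpha$, so $[\ell(v)]\in\partial P_\alpha$. Conversely, for $[\omega]\in\partial P_\alpha$ there is some $v_0\in J^\alpha\setminus 0$ with $\omega(v_0)=0$; a linear functional non-positive on the convex cone $J^\alpha$ and vanishing at an interior point must be identically zero on an open subset of $V$, hence on all of $V$, so $v_0\in\partial J^\alpha=E^\alpha$. By Lemma \ref{juz} the hypersurface $E^\alpha$ is smooth at $v_0$ with tangent hyperplane $\ker\ell(v_0)$, which is the unique supporting hyperplane of the convex cone $J^\alpha$ at the smooth boundary point $v_0$; hence $\omega$ is a positive multiple of $\ell(v_0)$ and $[\omega]=\tilde\ell([v_0])$. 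Thus $\partial\tilde\ell(\hat J^\alpha)=\partial P_\alpha$. Two closed topological $n$-balls in $S^n$ with common boundary are, by Jordan--Brouwer separation, either equal or complementary; containment then forces $\tilde\ell(\hat J^\alpha)=P_\alpha$. Every $\alpha$-spacelike or $\alpha$-null $W$ corresponds to a ray $[\omega]\in P_\alpha$, hence to a unique ray $[u]\in\hat J^\alpha$ with $\ell(u)$ a positive multiple of $\omega$, giving $W=\ker g_u(u,\cdot)$.

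For \emph{uniqueness}, suppose $u_1,u_2\in J^\alpha$ both define the same $W$. Then $\ell(u_1)=c\,\ell(u_2)$ for some $c\ne 0$. If $c>0$, homogeneity of $\ell$ gives $\ell(u_1)=\ell(cu_2)$ and Theorem \ref{aou} yields $u_1=cu_2$. If $c<0$, evaluating the identity at $u_2$ shows the left side $g_{u_1}(u_1,u_2)\le 0$ by reverse Cauchy-Schwarz while the right side $c\,g_{u_2}(u_2,u_2)\ge 0$, so both vanish and $u_2$ must be lightlike; symmetrically $u_1$ is lightlike; the equality case of reverse Cauchy-Schwarz then forces $u_1,u_2$ proportional, and Corollary \ref{sha} makes the proportionality positive, which combined with $\ell(u_1)=c\ell(u_2)$ forces $c>0$, a contradiction. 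Normalization is immediate: if $W$ is spacelike then $u$ is timelike and a positive rescaling achieves $g_u(u,u)=-1$; if $W$ is null then $u$ is lightlike, i.e.\ $u\in E^\alpha$. The main obstacle of the plan is the topological identification $\tilde\ell(\hat J^\alpha)=P_\alpha$; it hinges on recognizing both sides as closed topological $n$-balls and on matching their boundaries, which in turn relies on the smoothness of $E^\alpha$ (Lemma \ref{juz}) and the uniqueness of supporting hyperplanes at smooth boundary points of a convex body.
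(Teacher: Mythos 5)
Your argument for the converse and for uniqueness is sound and close in spirit to the paper's: the converse is exactly the paper's reverse Cauchy--Schwarz argument, and your uniqueness step (reducing to Theorem \ref{aou} when $c>0$ and to the equality case of reverse Cauchy--Schwarz plus Corollary \ref{sha} when $c<0$) is a correct variant of what the paper does. The real divergence is in \emph{existence}, and there you have a genuine gap: you invoke Theorem \ref{inj}, hence you must assume $\dim V\ge 3$, whereas Proposition \ref{bbg} carries no dimension hypothesis and is used later (Lemma \ref{cch} and the corollary on the Legendre map restricted to $J^\alpha$) without one. Your construction also leans on the statement that each component of $\hat E$ bounds a component of $\hat I$, which the paper likewise only establishes for $\dim V\ge 3$. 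So as written your proof does not prove the proposition in dimension $2$. The paper's existence argument is elementary and dimension-free: for $\alpha$-spacelike $W$ one picks $v\in J^\alpha(1)\setminus W$, notes that the linear coordinate dual to $v$ (with kernel $W$) is positive on the sharp cone $J^\alpha$, minimizes it over the compact set $J^\alpha(1)\cap\{v^1\le 1\}$, and reads off from tangency of the supporting translate of $W$ to the strictly convex body $J^\alpha(1)$ (Theorem \ref{jod}) that $W=\ker \dd L(u)=\ker g_u(u,\cdot)$ with $g_u(u,u)=-1$; for $\alpha$-null $W$ one simply takes $u\in W\cap E^\alpha$ and uses that $W$ supports $J^\alpha$ at $u$. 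Uniqueness in the paper comes from strict convexity (no two parallel supporting hyperplanes on the same side) in the spacelike case and from the equality case of reverse Cauchy--Schwarz in the null case.

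That said, for $\dim V\ge 3$ your route is legitimate and even illuminating: identifying $\tilde\ell(\hat J^\alpha)$ with the projectivized polar cone $P_\alpha$ by matching boundaries and applying Jordan--Brouwer essentially anticipates Lemma \ref{cch} and the subsequent corollary that $\ell$ is a bijection from $J^\alpha$ onto $J^{\alpha*}$, so you are proving a stronger global statement and deducing the proposition from it. The steps you flag as delicate do go through: $\partial P_\alpha$ consists exactly of classes of functionals vanishing somewhere on $E^\alpha$ (a nonzero linear functional cannot vanish at an interior point of a cone on which it is $\le 0$), and uniqueness of the supporting hyperplane at the $C^{3,1}$ boundary point $v_0$ follows from Lemma \ref{juz}. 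The cost is that you import the full strength of the covering-space argument behind Theorem \ref{inj} and some point-set topology to prove a statement the paper gets from compactness and convexity alone; the benefit is a cleaner global picture of the Legendre map on causal cones. To repair the proposal you should either restrict your claim to $\dim V\ge 3$ explicitly, or replace the existence step by the direct supporting-hyperplane argument above.
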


Observe that since $J^\alpha$ is sharp (Cor.\ \ref{sha}) there exist
$\alpha$-spacelike hyperplanes.

\begin{proof}
Suppose that $W$ is $\alpha$-spacelike. Let $v\in J^\alpha(1)$, and
let $\{b_i\}$, $b_1=v$, $b_i\in W$ for $i\ge 2$, be a base of $V$.
Let $\{v^i\}$ be the corresponding coordinates. The set
$J^\alpha(1)\cap \{v: v^1\le 1\}$ is compact  thus there is a
minimum value $s_0$ of $v^1$ over it. Thus $s_0$ is the minimum
value of $s\ge 0$ for which $(s v+W)\cap J^\alpha(1)\ne \emptyset$.
Necessarily, $s_0>0$. Let $u\in (s_0 v+W)\cap J^\alpha(1)$ then the
hyperplane $u+W$ is tangent to the strictly convex set $J^\alpha(1)$
at $u$, that is for every $w\in W$, $0=\dd L(u)(w)=g_{u}(u,w)$.
Since $g_u(u,u)=-1$, we have found the searched $u$.

Suppose that $u'\in J^\alpha(1)$ is another vector with the same
property, then $u'+W$ is tangent to $J^\alpha(1)$ at $u'$, which
contradicts the strict convexity of $J^\alpha(1)$ (no two parallel
hyperplanes can be tangent to a strictly convex set leaving the convex set on the same side of the hyperplane).

Suppose that $W$ is $\alpha$-null, then $W\cap E^\alpha\ne
\emptyset$. Let $u\in W\cap E^\alpha$, then $W$ is tangent to the
convex set  $J^\alpha$ at $u$, thus for every $w\in W$, $0=\dd
L(u)(w)=g_{u}(u,w)$. Since $u\in E^\alpha$, $g_u(u,u)=0$. Suppose
that there is another vector $u'\in J^\alpha$ such that for every
$w\in W$, $g_{u'}(u',w)=0$, then since $u\in W$, $g_{u'}(u',u)=0$.
By the equality case in the Finsler reverse Cauchy-Schwarz
inequality it must be $u'=s u$ for some $s\ge 0$.

Finally, let $u\in J^\alpha$ and let $W$ be defined through Eq.\
(\ref{aop}). Suppose that $W$ intersects $J^\alpha$, namely suppose that $W$ is not
$\alpha$-spacelike, and let $w\in W\cap J^\alpha$ then
\[
0=g_u(u,w)\le -\sqrt{-g_u(u,u)}\sqrt{-g_w(w,w)},
\]
 where we used the
Finsler reverse Cauchy-Schwarz inequality. Thus equality must hold which implies that $w$ and $u$ are both proportional and lightlike. Thus $W$ can intersect $J^\alpha$ only on $E^\alpha$, namely  $W$ is $\alpha$-null. In conclusion, if $W$ is defined through Eq.\ (\ref{aop}) with $u\in J^\alpha$
then it is $\alpha$-spacelike or $\alpha$-null. $\square$
\end{proof}

\begin{definition}
Given an $\alpha$-spacelike hyperplane $W$, the vector $u$ selected
by  Prop.\ \ref{bbg} is called {\em $\alpha$-normal} to $W$.
\end{definition}

Let us recall that an half-space is a closed subset of $V$ which
lies on one side of a hyperplane $W\subset V$ passing through the origin.
A {\em cone} $C$ is a subset of $V$ which is closed under  product by a positive scalar. A proper closed convex cone containing the origin is the intersection of the half-spaces containing the cone. Conversely, any such intersection is a proper closed convex cone containing the origin. The convex envelop of a union of convex cones is a convex cone.

\begin{lemma} \label{hfk}
Let $\textrm{dim}\, V\ge 3$. No two components of $J$ can lie on the
same half-space of $V$. Equivalently, the convex envelop of any two distinct
components $J^\alpha,J^\beta\subset J$ is the whole $V$:
$\textrm{Conv}(J^\alpha \cup J^\beta)=V$. Another equivalent
formulation of this property is $\textrm{Conv}(I^\alpha \cup
I^\beta)=V$.
\end{lemma}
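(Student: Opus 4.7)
The plan is to use the Legendre map machinery developed above. First, I would dispense with the two equivalent formulations: $\mathrm{Conv}(J^\alpha\cup J^\beta)$ is the smallest convex cone containing both components, so it fails to equal $V$ iff both $J^\alpha$ and $J^\beta$ lie inside some closed half-space (Hahn-Banach). The equivalence with $\mathrm{Conv}(I^\alpha\cup I^\beta)=V$ follows because $I^\alpha$ is open and nonempty, hence $\mathrm{Conv}(I^\alpha\cup I^\beta)$ is a nonempty open convex set whose closure contains $\mathrm{Conv}(J^\alpha\cup J^\beta)$, and an open convex set dense in $V$ must equal $V$ (else it lies on one side of a separating hyperplane).

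For the main claim I argue by contradiction: suppose $J^\alpha\cup J^\beta\subset\{v:\phi(v)\le 0\}$ for some nonzero $\phi\in V^*$, with bounding hyperplane $W=\ker\phi$. I first observe that $W$ cannot meet $I^\alpha$: at any point $v\in I^\alpha\cap W$, the open set $I^\alpha$ is contained in $\{\phi\le0\}$ and $\phi(v)=0$, which forces $\phi$ to have an interior maximum on a neighborhood and therefore to vanish identically, a contradiction. Thus $W$ is $\alpha$-spacelike or $\alpha$-null, and by Proposition \ref{bbg} there is (up to a positive scalar) a unique $u_\alpha\in J^\alpha$ with $W=\{v: g_{u_\alpha}(u_\alpha,v)=0\}$; the same reasoning yields $u_\beta\in J^\beta$.

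Since $g_{u_\alpha}(u_\alpha,\cdot)$ and $\phi$ are both nonzero linear functionals vanishing on the codimension-one subspace $W$, they are proportional: $g_{u_\alpha}(u_\alpha,\cdot)=\lambda_\alpha\phi$, and similarly $g_{u_\beta}(u_\beta,\cdot)=\lambda_\beta\phi$. To fix signs, I pick any $v\in I^\alpha$; by the reverse Cauchy--Schwarz inequality $g_{u_\alpha}(u_\alpha,v)<0$ (strictly, since $v$ is timelike so cannot be proportional to a lightlike $u_\alpha$ when $u_\alpha$ lies on $W$, and is trivially negative when $u_\alpha$ is timelike), while $\phi(v)<0$ since $I^\alpha\subset\{\phi\le0\}$ and $I^\alpha\cap W=\emptyset$. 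Hence $\lambda_\alpha>0$, and analogously $\lambda_\beta>0$. Using the positive homogeneity $\ell(sv)=s\,\ell(v)$ for $s>0$, we then have
\[
\ell(u_\alpha)=\lambda_\alpha\phi=\frac{\lambda_\alpha}{\lambda_\beta}\ell(u_\beta)=\ell\!\left(\tfrac{\lambda_\alpha}{\lambda_\beta}u_\beta\right).
\]
Now invoke Theorem \ref{inj}, which requires precisely $\dim V\ge 3$: the Legendre map is a bijection on $V\setminus 0$, so $u_\alpha=(\lambda_\alpha/\lambda_\beta)u_\beta$. But $u_\alpha\in J^\alpha$ while $(\lambda_\alpha/\lambda_\beta)u_\beta\in J^\beta$ (components are positive cones), contradicting $\alpha\ne\beta$.

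The main obstacle is the sign bookkeeping for $\lambda_\alpha,\lambda_\beta$ when $u_\alpha$ or $u_\beta$ happens to be lightlike (the $\alpha$-null case), which is why one must test the sign of the proportionality on an interior timelike vector rather than on $u_\alpha$ itself. The essential input from the previous results is the combination of strict convexity of $J^\alpha$ (guaranteeing uniqueness of the supporting normal in Proposition \ref{bbg}), the reverse Cauchy--Schwarz inequality (controlling the sign), and the global injectivity of $\ell$ (which is the only place where the dimension hypothesis $\dim V\ge 3$ is used).
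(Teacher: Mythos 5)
Your argument is correct and follows essentially the same route as the paper's: reduce to a separating hyperplane $W$, use Proposition \ref{bbg} to produce supporting normals $u_\alpha\in J^\alpha$, $u_\beta\in J^\beta$ with $\ker g_{u_\alpha}(u_\alpha,\cdot)=\ker g_{u_\beta}(u_\beta,\cdot)=W$, fix the sign of the proportionality constant by testing against interior timelike vectors via the reverse Cauchy--Schwarz inequality, and derive a contradiction with the global injectivity of the Legendre map (Theorem \ref{inj}), which is exactly where $\dim V\ge 3$ enters. Your explicit verification that $W$ cannot meet $I^\alpha$ (so that Proposition \ref{bbg} applies) is a step the paper leaves implicit, but otherwise the two proofs coincide.
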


\begin{proof}
Let us show that the first two claims are equivalent. Indeed, the
latter implies the former because the convex envelop of
$J^\alpha$, $J^\beta$, at one side $V^+$ of $W$ cannot contain the
other side $V^-\backslash W$. Conversely, the former claim implies
the latter because $J^\alpha \cup J^\beta$ is not contained in any
half-space.

Suppose that $n\ge 2$, so that $\ell$ is injective. Suppose by
contradiction that there is a hyperplane $W$ and two components
$J^\alpha$, $J^\beta$, which lie at the same side of $W$. Then by
Prop.\ \ref{bbg} there are $u^\alpha \in J^\alpha$,
$u^\beta\in J^\beta$ such that defined $p_\alpha=
g_{u^\alpha}(u^\alpha,\cdot)$ and
$p_\beta=g_{u^\beta}(u^\beta,\cdot)$
\[
W=\ker p_\alpha=\ker p_\beta.
\]
In order to prove that $p_\alpha=s p_\beta$ for some $s>0$ it
suffices to show that both $p_\alpha$ and $p_\beta$ take negative
value at some point on the side of $W$ which contains the two cones.
This is true because taken $v^\alpha\in I^\alpha$ and $v^\beta\in
I^\beta$, $v^\alpha,v^\beta$ lie on the same side of $W$ and by the
Finsler reverse Cauchy-Schwarz inequality $p_\alpha(v^\alpha)<0$,
$p_\beta(v^\beta)<0$.
Thus $g_{u^\alpha}(u^\alpha,\cdot)=s
g_{u^\beta}(u^\beta,\cdot)=g_{su^\beta}(su^\beta,\cdot)$ namely
$\ell$ is not injective, a contradiction.

For the last claim, suppose that $\textrm{Conv}(J^\alpha \cup
J^\beta)=V$. Let us recall that for every set $S$, $\textrm{Conv}\,
\overline{S}\subset \overline{\textrm{Conv}\, S}$, thus
\[
V=\textrm{Conv}( J^\alpha \cup
J^\beta)=\textrm{Conv}(\overline{I^\alpha \cup I^\beta})\subset
\overline{\textrm{Conv}(I^\alpha \cup I^\beta)}
\]
It is well known that for every open and convex set $A\subset \mathbb{R}^k$ we have
$A=\textrm{Int}\overline{A}$ (i.e.\ $A$ is regular open) thus with
$A=\textrm{Conv}(I^\alpha \cup I^\beta)$ we conclude that $A=V$. $\square$
\end{proof}

\begin{proposition} \label{ksg}
Let $\textrm{dim}\, V\ge 3$. Given any two distinct components
$I^\alpha$ and $I^\beta$  we have $I^\alpha \cap (-I^\beta)\ne
\emptyset$.
\end{proposition}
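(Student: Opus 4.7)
The plan is to combine Lemma \ref{hfk} with the sharpness of each cone (Corollary \ref{sha}) and the convexity statements of Theorem \ref{jod} together with the reverse triangle inequality (\ref{ccl}). Since $J^\alpha\cup\{0\}$ and $J^\beta\cup\{0\}$ are closed convex cones containing the origin, the identity $\textrm{Conv}(J^\alpha\cup J^\beta)=V$ supplied by Lemma \ref{hfk} is equivalent to the Minkowski-sum statement
\[
V=(J^\alpha\cup\{0\})+(J^\beta\cup\{0\}),
\]
so every vector of $V$ can be written as $a+b$ with $a\in J^\alpha\cup\{0\}$ and $b\in J^\beta\cup\{0\}$.

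I would then fix an arbitrary $w_0\in I^\beta$ and apply this decomposition to $-w_0$, writing $-w_0=a+b$. The objective is to show $a\in J^\alpha\cap(-I^\beta)$. To eliminate the degenerate case $a=0$, note that it would give $-w_0=b\in J^\beta\cup\{0\}$, so both $w_0$ and $-w_0$ would belong to the convex cone $J^\beta\cup\{0\}$; the sharpness established in Corollary \ref{sha} then forces $w_0=0$, contradicting $w_0\in I^\beta$. Hence $a\in J^\alpha$. Rearranging, $-a=w_0+b$ is the sum of the timelike vector $w_0\in I^\beta$ with the causal (possibly zero) vector $b\in J^\beta\cup\{0\}$ in the \emph{same} $\beta$-component, so the Finslerian reverse triangle inequality (\ref{ccl}) gives $-a\in I^\beta$, equivalently $a\in -I^\beta$.

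The last step is to upgrade $a\in J^\alpha\cap(-I^\beta)$ to an interior witness of $I^\alpha\cap(-I^\beta)$. Since $a$ may a priori only be lightlike, I would perturb it along a timelike $\alpha$-direction: pick any $\tilde a\in I^\alpha$ (non-empty by the existence theorem for the sets $I$) and set $w_\epsilon:=a+\epsilon\tilde a$ for $\epsilon>0$. The reverse triangle inequality (\ref{ccl}) applied to $\tilde a\in I^\alpha$ and $a\in J^\alpha$ gives $w_\epsilon\in I^\alpha$ for every $\epsilon>0$, while the openness of $-I^\beta$ together with $a\in -I^\beta$ ensures $w_\epsilon\in -I^\beta$ for all sufficiently small $\epsilon$. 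Any such $w_\epsilon$ lies in $I^\alpha\cap(-I^\beta)$.

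The main delicate point, and the place where the dimensional hypothesis $\textrm{dim}\,V\ge 3$ enters, is the invocation of Lemma \ref{hfk}: without it we could not guarantee that $-w_0$ admits a decomposition into $\alpha$- and $\beta$-pieces at all. Once that decomposition is in hand, the remaining obstacle is purely the possibility that the naive witness $a$ sits on the lightcone $E^\alpha$, which is handled cleanly by the final perturbation argument.
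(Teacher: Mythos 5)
Your proof is correct and follows essentially the same route as the paper: decompose $-w_0$ via Lemma \ref{hfk}, rule out the degenerate term by sharpness (Corollary \ref{sha}), and use the convex cone structure of $J^\beta$ to place the $\alpha$-piece in $-I^\beta$. The only difference is that the paper invokes the open-cone formulation $\textrm{Conv}(I^\alpha \cup I^\beta)=V$ (also part of Lemma \ref{hfk}), which produces a witness lying directly in $I^\alpha$ and renders your final perturbation step unnecessary.
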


\begin{proof}
We have $\textrm{Conv}(I^\alpha \cup I^\beta)=V$, thus taken $v\in
I^\beta$ there are $v_1\in I^\alpha$ and $v_2\in I^\beta$ such that
$-v=a v_1+b v_2$ with $a+b=1$, $a,b\ge 0$, and where $a>0$ because
$I^\beta\cap(-I^\beta)=\emptyset$ as $I^\beta$ is sharp. Thus
$av_1\in (-I^\beta)\cap I^\alpha$. $\square$
\end{proof}

We are now able to prove the main theorem of this work. It answers question (vi) in the reversible case.

\begin{theorem}
Let $\textrm{dim}\, V\ge 3$ and suppose that $L$ is reversible, then $I$ has two components.
\end{theorem}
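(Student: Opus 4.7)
The plan is to combine the two main inputs already established: reversibility forces components to be permuted by the antipodal map $v\mapsto -v$, while Proposition \ref{ksg} tells us that any two distinct components must meet after applying the antipodal map. These two facts should be mutually consistent only when there are exactly two components.

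First I would observe that reversibility $L(v)=L(-v)$ immediately gives $g_{-v}=g_v$, so $v\in I$ iff $-v\in I$, that is $-I=I$. Since $v\mapsto -v$ is a homeomorphism of $V\backslash 0$ sending $I$ to itself, it permutes the (finitely many) components of $I$: for each component $I^\alpha$, the set $-I^\alpha$ is again a component. I would then rule out any fixed point of this involution: if $-I^\alpha=I^\alpha$, then picking any $v\in I^\alpha$ we would have $-v\in I^\alpha$, and by convexity of $I^\alpha$ (Theorem \ref{jod}) the whole segment joining $v$ to $-v$ would lie in $J^\alpha\cup\{0\}$, contradicting the fact that $J^\alpha$ is sharp (Corollary \ref{sha}). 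Hence the involution $I^\alpha\mapsto -I^\alpha$ is fixed-point-free, and in particular there exists at least one pair $\{I^\alpha,-I^\alpha\}$, so $I$ has at least two components.

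Next I would show that there are no more than two. Suppose $I^\alpha$ and $I^\beta$ are two distinct components. Proposition \ref{ksg} gives $I^\alpha\cap(-I^\beta)\ne\emptyset$. But $-I^\beta$ is itself a component of $I$, and distinct components of a topological space are disjoint, so this intersection forces $I^\alpha=-I^\beta$. Applied to any triple $I^1,I^2,I^3$ of pairwise distinct components (if it existed), this would give simultaneously $I^1=-I^2$ and $I^1=-I^3$, whence $I^2=I^3$, a contradiction. Therefore $I$ has at most two components, and together with the previous paragraph exactly two.

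I don't expect any real obstacle here: the dimension hypothesis $\dim V\ge 3$ enters only through Proposition \ref{ksg} (which in turn relies on the injectivity of the Legendre map, Theorem \ref{inj}), and reversibility is used only to make the antipodal map preserve $I$. The slightly delicate point worth stating explicitly is why $-I^\alpha\ne I^\alpha$ for every $\alpha$, since this is what guarantees the number of components is at least two and that the pairing $I^\alpha\leftrightarrow -I^\alpha$ is a genuine swap; but this follows cleanly from sharpness of the components established in Corollary \ref{sha}.
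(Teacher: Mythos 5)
Your proof is correct and follows essentially the same route as the paper: reversibility makes $-I^\beta$ a component, and Proposition \ref{ksg} combined with disjointness of components forces any two distinct components to be antipodal, hence there are at most two. The only difference is that you make explicit the lower bound (that $-I^\alpha\ne I^\alpha$ via sharpness, so there are at least two components), a point the paper leaves implicit.
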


\begin{proof}
Let $I^\alpha$ be a component of $I$, and let $I^{-\alpha}=-I^{\alpha}$ be its opposite. Suppose that there is another component $I^\beta$ with $\beta\ne \alpha,-\alpha$, then  $I^\beta$ cannot intersect neither $I^\alpha$ nor $I^{-\alpha}$ since the components of $I$ are disjoint, however it must intersect both of them by  Prop.\ \ref{ksg}. The contradiction proves that there are just two components. $\square$
%
\end{proof}

The following result strengthens the Finslerian reverse Cauchy-Schwarz inequality.

\begin{corollary}
Let $\textrm{dim}\, V\ge 3$ and suppose that $L$ is reversible, then for every $v\in I$, and $w\in V$ such that $g_v(v,w)=0$ we have $L(v+w)\ge L(v)$, equality holding if and only if $w=0$.
\end{corollary}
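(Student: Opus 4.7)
I would set $c = \sqrt{-g_v(v,v)} > 0$ and let $J^\alpha$ denote the component of $I$ containing $v$, so that $v \in E^\alpha(c) = \partial J^\alpha(c)$. The first step is to recognize the hypothesis $g_v(v,w) = 0$ as saying that $w$ lies in the tangent hyperplane $W$ to $E^\alpha(c)$ at $v$, and to extract from Theorem \ref{jod} together with the Finslerian reverse Cauchy--Schwarz inequality the strict supporting hyperplane property: for every $u \in J^\alpha(c)$,
\[
-g_v(v,u) \;\ge\; \sqrt{-g_v(v,v)}\,\sqrt{-g_u(u,u)} \;\ge\; c^2,
\]
so $J^\alpha(c) \subset \{u : g_v(v,u) \le -c^2\}$, with equality throughout forcing $u \propto v$ and then $u = v$ after matching normalizations.

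Next I would use the hypotheses to control the opposite timelike cone. The main theorem just proved --- which requires precisely $\dim V \ge 3$ and reversibility --- gives that $I$ has exactly two components $I^\alpha$ and $-I^\alpha$, so
\[
J(c) = J^\alpha(c) \cup (-J^\alpha(c)).
\]
Negating the inclusion of the previous paragraph yields $-J^\alpha(c) \subset \{u : g_v(v,u) \ge c^2\}$, so the two halves of $J(c)$ sit in disjoint closed half-spaces separated by the strip $\{|g_v(v,\cdot)| < c^2\}$.

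The conclusion is then immediate. For $w \in W$ with $w \ne 0$, the computation
\[
g_v(v,v+w) = g_v(v,v) + g_v(v,w) = -c^2
\]
places $v+w$ on the hyperplane $\{g_v(v,\cdot) = -c^2\}$, which excludes $v+w \in -J^\alpha(c)$; and the strict supporting hyperplane property of the first paragraph forbids $v+w \in J^\alpha(c)$ unless $w = 0$. Hence $v+w \notin J(c)$, meaning $g_{v+w}(v+w,v+w) > -c^2 = g_v(v,v)$, i.e.\ $L(v+w) > L(v)$. The case $w = 0$ gives equality trivially.

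The main obstacle is that the naive second-order Taylor expansion
\[
L(v+tw) = L(v) + \tfrac{t^2}{2}\, g_v(w,w) + o(t^2),
\]
combined with $g_v(w,w) \ge 0$ (since $w$ is $g_v$-orthogonal to the $g_v$-timelike vector $v$), only delivers a local minimum at $t = 0$. Upgrading this to a global statement requires ruling out that $v+w$ escapes into the opposite timelike cone $-J^\alpha$, and this is exactly where both hypotheses enter: reversibility identifies the second component as $-I^\alpha$, and $\dim V \ge 3$ limits the total number of components to two, so no third sector of $J$ can spoil the argument.
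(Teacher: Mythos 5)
Your proof is correct and follows essentially the same route as the paper's: you use the reverse Cauchy--Schwarz inequality to realize $v+\ker g_v(v,\cdot)$ as a strictly supporting hyperplane of $J^\alpha(c)$ touching it only at $v$, and then invoke reversibility together with the two-component theorem to place $-J^\alpha(c)$ strictly on the opposite side, so that $v+w\notin J(c)$ for $w\ne 0$. The paper phrases the first step via strict convexity of $J^\alpha(c)$ and the second via the hyperplane being $-\alpha$-spacelike, but the substance is identical.
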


\begin{proof}
Let $J^\alpha(c)$ be the component of $J(c)$ such that $v\in J^\alpha$, $g_v(v,v)=-c^2$. Since $J^\alpha(c)$ is strictly convex $v+w$ belongs to it iff $w=0$. Observe that $\ker g_v(v,\cdot)$ is $\alpha$-spacelike, hence by reflexivity, $-\alpha$-spacelike. The hyperplane $v+\ker g_v(v,\cdot)$ cannot intersect $I^{-\alpha}$ and so, since there are only two components, does not intersect any other component of $I$ but $I^\alpha$ at $v$. Thus the desired conclusion follows since $L$ is positive outside $I^\alpha$. $\square$
\end{proof}

\begin{proposition} \label{ofa}
Let $v\in V$ and suppose that for every $u\in E^\alpha$, $g_u(u,v)\le 0$, then $v\in J^\alpha\cup\{0\}$.
\end{proposition}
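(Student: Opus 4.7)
The plan is to argue by contradiction: assume $v\neq 0$ and $v\notin J^\alpha$, and exhibit some $u\in E^\alpha$ with $g_u(u,v)>0$. The geometric picture is that the hypothesis says every supporting half-space of $J^\alpha$ along a lightlike generator leaves $v$ on the ``$J^\alpha$-side'', so $v$ should lie in the closed convex cone $J^\alpha\cup\{0\}$ cut out by those half-spaces. The natural way to make this precise, given the tools already at hand in the paper, is to slide from a timelike reference vector toward $v$ along a straight segment and locate the first exit point.

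Concretely, I would first fix some $w\in I^\alpha$ (nonempty) chosen so that $v$ is \emph{not} a negative scalar multiple of $w$; this is possible because $I^\alpha$ is an open set of full dimension $n+1$ while the forbidden direction is a single ray. Then the segment $\gamma(t)=(1-t)w+tv$, $t\in[0,1]$, avoids the origin. Since $J^\alpha$ is closed and convex (Theorem \ref{jod}) and contains $\gamma(0)=w$ but not $\gamma(1)=v$, the set $\{t\in[0,1]:\gamma(t)\in J^\alpha\}$ is a closed interval $[0,t^*]$ with $t^*\in(0,1)$. The endpoint $u:=\gamma(t^*)$ is a boundary point of $J^\alpha$ distinct from the origin, hence $u\in E^\alpha$; moreover, $u$ really lies in the component labeled by $\alpha$ because the subarc $\gamma([0,t^*))$ sits inside $I^\alpha$ by convexity, so $u\in\overline{I^\alpha}=J^\alpha$.

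Solving $u=(1-t^*)w+t^*v$ for $v$ and using $g_u(u,u)=0$ (since $u\in E^\alpha$) would then give
\[
g_u(u,v)=-\frac{1-t^*}{t^*}\,g_u(u,w).
\]
By the Finslerian reverse Cauchy--Schwarz inequality applied to $u,w\in J^\alpha$ in the same component, $g_u(u,w)\le 0$, with equality only if $u$ and $w$ are both proportional to a single lightlike vector; since $w$ is timelike, that equality case is excluded and $g_u(u,w)<0$ strictly. Hence $g_u(u,v)>0$, contradicting the hypothesis. The only delicate step I anticipate is the choice of $w$ guaranteeing that the segment avoids the origin — without it, the exit point from $J^\alpha$ might be $0$ rather than a lightlike vector, and there would be no legitimate $u\in E^\alpha$ to test the hypothesis against.
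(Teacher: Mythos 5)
Your proof is correct, and it takes a more constructive route than the paper's. The paper argues via convex duality: by the reverse Cauchy--Schwarz inequality each $H_u=\{x\in V: g_u(u,x)\le 0\}$ with $u\in E^\alpha$ is a supporting half-space of the proper closed convex cone $J^\alpha\cup\{0\}$, and the cone is recovered as $\bigcap_{u\in E^\alpha}H_u$, so the hypothesis places $v$ in the cone. What the paper leaves implicit is why the half-spaces attached to the \emph{lightlike} boundary points already suffice to cut out the cone (the general fact it quotes concerns all half-spaces containing the cone). Your contrapositive argument supplies precisely this separation step in explicit form: the first-exit point $u=\gamma(t^*)$ of the segment from a timelike $w\in I^\alpha$ to $v$ lies in $J^\alpha\setminus I^\alpha=E^\alpha$ (and your care in choosing $w$ off the ray through $-v$, so that the segment misses the origin and the exit point is a genuine lightlike vector, is indeed necessary), after which the strict inequality $g_u(u,w)<0$ from the equality case of reverse Cauchy--Schwarz converts into $g_u(u,v)>0$, the desired contradiction. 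Both arguments rest on the same two inputs --- convexity of $J^\alpha$ (Theorem \ref{jod}) and the reverse Cauchy--Schwarz inequality --- but yours is self-contained where the paper appeals to an abstract fact about closed convex cones, at the cost of being longer.
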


The hypothesis is implied by the alternative assumption: $g_u(u,v)\le 0$, for every  $u\in I^\alpha$, using continuity.

\begin{proof}
The condition $g_u(u,v)\le 0$, states that $v$ is contained in an half-space $H_u$ which is bounded by the $\alpha$-null tangent hyperplane to $J^\alpha$, $\ker g_u(u,\cdot)$, and which contains $J^\alpha$. But since any proper closed convex cone containing the origin is the intersection of the half spaces that contain it, $v$ is actually in the intersection $\cap_u H_u$ which coincides with $J^\alpha\cup\{0\}$. $\square$
\end{proof}

\subsection{Polar cones and the Hamiltonian}

Let $V^*$ be the  space dual to $V$.
 The polar cone to $J^\alpha$ is
\[
J^{\alpha*}=\{p\in V^*\backslash 0: p(v)\le 0 \ for \ every \ v\in J^\alpha \}.
\]
General properties of a polar cone are: it is convex, closed in $V^*\backslash 0$, its
polar cone is the original closed cone $J^\alpha$, and since $J^\alpha$ is
sharp and with non-empty interior so is $J^{\alpha*}$.

\begin{lemma} \label{cch}
If $p\in J^{\alpha*}$ then there is $u \in J^\alpha$ such that
\[
p=g_u(u,\cdot).
\]
\end{lemma}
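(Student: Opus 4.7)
My plan is to exhibit $u\in J^\alpha$ via the correspondence between $\alpha$-spacelike/null hyperplanes and elements of $J^\alpha$ supplied by Proposition \ref{bbg}. Given $p\in J^{\alpha*}$, so $p\neq 0$ and $p(v)\le 0$ for every $v\in J^\alpha$, I would first argue that $p$ is in fact strictly negative on $I^\alpha$. For if some $v_0\in I^\alpha$ satisfied $p(v_0)=0$, then since $I^\alpha$ is open, every $w\in V$ admits $\epsilon>0$ with $v_0\pm\epsilon w\in I^\alpha$, whence $\pm\epsilon\, p(w)=p(v_0\pm\epsilon w)\le 0$, forcing $p(w)=0$ for all $w\in V$ and contradicting $p\neq 0$.

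With strict negativity in hand, the hyperplane $W:=\ker p\subset V$ can meet $J^\alpha$ only on $E^\alpha$. Hence $W$ is either $\alpha$-spacelike or $\alpha$-null; such hyperplanes exist because $J^\alpha$ is sharp by Corollary \ref{sha}. By Proposition \ref{bbg} there is $u\in J^\alpha$, unique up to a positive factor, with $W=\ker g_u(u,\cdot)$. Two non-zero linear forms sharing the same kernel are proportional, so
\[
p=\lambda\, g_u(u,\cdot)
\]
for some $\lambda\in\mathbb{R}\setminus\{0\}$.

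It remains to fix the sign of $\lambda$ and absorb it into $u$. The Finslerian reverse Cauchy-Schwarz inequality gives $g_u(u,v)\le 0$ for every $v\in J^\alpha$, and by hypothesis the same is true of $p$; evaluating both at any fixed $v\in I^\alpha$ yields strictly negative values (using the strict negativity of $p$ established above together with the strict inequality in the Cauchy-Schwarz case when $u$ and $v$ are non-proportional timelike, or with $g_u(u,v)<0$ whenever $u\in J^\alpha$ and $v\in I^\alpha$), so $\lambda>0$. Because $g$ is positively homogeneous of degree zero in the fibre direction one has $g_{\lambda u}=g_u$, and therefore $u':=\lambda u\in J^\alpha$ satisfies $g_{u'}(u',\cdot)=\lambda\, g_u(u,\cdot)=p$.

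The main obstacle I anticipate is the first step, namely ensuring that $\ker p$ does not cross the interior of $J^\alpha$, i.e. the strict negativity $p<0$ on $I^\alpha$; once this is secured, Proposition \ref{bbg} does the structural work and only a sign check and a homogeneity rescaling remain. A seemingly shorter route via the Legendre diffeomorphism of Theorem \ref{inj} — set $u:=\ell^{-1}(p)$ and argue $u\in J^\alpha$ — would require the dimensional hypothesis $\dim V\ge 3$ and still reduce to verifying that $u$ lies in the correct causal cone, whereas the argument above via Proposition \ref{bbg} is dimension-independent.
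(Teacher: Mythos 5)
Your proposal is correct and follows essentially the same route as the paper: show that $\ker p$ cannot meet $I^\alpha$ (the paper uses the same openness/linearity argument), conclude via Proposition \ref{bbg} that $\ker p=\ker g_u(u,\cdot)$ for some $u\in J^\alpha$, and then fix the sign of the proportionality constant and absorb it by a positive rescaling using the zero-degree homogeneity of $g$. Your sign check, evaluating both forms at an auxiliary $v\in I^\alpha$ where each is strictly negative, is in fact slightly more careful than the paper's (which evaluates at $u$ itself and is less transparent when $u$ is lightlike), but this is a minor variant rather than a different approach.
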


\begin{proof}
If $p\in J^{\alpha*}$, $p\ne 0$, then $\ker p$ can
intersect $J^\alpha$ only on its boundary $E^\alpha$, indeed if
$u\in I^\alpha\cap \ker p$ then $0=p(u)$ and by continuity and
linearity we can find $u'\in J^\alpha$ in a neighborhood of $u$ such
that $p(u')>0$, a contradiction. As a consequence, $W:=\ker p$ is
either $\alpha$-spacelike or $\alpha$-null (Prop.\ \ref{bbg}), that
is there is $u$ $\alpha$-timelike or $\alpha$-lightlike such that $\ker
p=\ker g_u(u, \cdot)$. Observe that $p(u)\le 0$ by definition of polar cone and $g_u(u,u)\le 0$ since $u\in J^\alpha$, thus $u$ can be rescaled with a positive constant to obtain the required equation. $\square$
\end{proof}


\begin{corollary}
The Legendre map is a bijection between $J^\alpha$ and
$J^{\alpha*}$ (thus a homeomorphism).
\end{corollary}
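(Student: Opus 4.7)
The plan is to assemble three ingredients already prepared in the paper: (a) the image $\ell(J^\alpha)$ sits in $J^{\alpha*}$, (b) every element of $J^{\alpha*}$ comes from $J^\alpha$, and (c) the restriction is injective. Each of these corresponds to a specific earlier result, so the proof is essentially a synthesis.

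For (a), I would start by taking an arbitrary $u\in J^\alpha$ and showing $\ell(u)=g_u(u,\cdot)\in J^{\alpha*}$. By the ``in particular'' clause of the Finslerian reverse Cauchy--Schwarz inequality, for every $v\in J^\alpha$ one has $g_u(u,v)\le 0$, which is exactly the defining condition for $\ell(u)$ to lie in the polar cone $J^{\alpha*}$. Non-vanishing of $\ell(u)$ follows from $u\ne 0$ and the non-degeneracy of $g_u$, so indeed $\ell(u)\in V^*\backslash 0$. This gives $\ell(J^\alpha)\subseteq J^{\alpha*}$. Part (b), surjectivity, is immediate from Lemma \ref{cch}: any $p\in J^{\alpha*}$ can be written as $p=g_u(u,\cdot)=\ell(u)$ for some $u\in J^\alpha$. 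Part (c), injectivity, is Theorem \ref{aou} verbatim: if $v_1,v_2\in J^\alpha$ satisfy $\ell(v_1)=\ell(v_2)$ then $v_1=v_2$. Combining these three gives the bijection between $J^\alpha$ and $J^{\alpha*}$.

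For the parenthetical homeomorphism claim, the map $\ell$ is $C^{2,1}$ on $V\backslash 0$, in particular continuous. Under the standing assumption $\dim V\ge 3$, Theorem \ref{inj} says $\ell$ is a global $C^{2,1}$-diffeomorphism from $V\backslash 0$ onto $V^*\backslash 0$, so its restriction to the closed subset $J^\alpha\subset V\backslash 0$ is automatically a homeomorphism onto its image $J^{\alpha*}$. No further argument is needed. I don't expect any real obstacle here; the only subtlety worth double-checking is that $\ell(u)$ is non-zero whenever $u\in J^\alpha$ (handled by non-degeneracy of $g_u$), and that one is entitled to invoke Theorem \ref{inj} in the ambient dimension — which is fine since this corollary follows Lemma \ref{cch} within the same section where $\dim V\ge 3$ has been the operative hypothesis since Theorem \ref{inj}.
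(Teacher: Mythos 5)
Your proof is correct and follows essentially the same route as the paper: injectivity from Theorem \ref{aou}, surjectivity onto $J^{\alpha*}$ from Lemma \ref{cch}, and containment $\ell(J^\alpha)\subseteq J^{\alpha*}$ from the reverse Cauchy--Schwarz inequality. The only (harmless) divergence is that you justify the parenthetical homeomorphism claim via Theorem \ref{inj}, which imports the hypothesis $\dim V\ge 3$, whereas the bijection itself needs no dimension restriction and the homeomorphism could alternatively be obtained in any dimension by restricting to the compact set $\hat J^\alpha$ and using homogeneity.
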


\begin{proof}
By Theorem \ref{aou} we already known that $\ell$ is
injective. From Lemma \ref{cch} we have that $J^{\alpha*}$ is in the image of $J^\alpha$, and from the Finslerian reverse Cauchy-Schwarz inequality we have that the image of  $J^\alpha$ is in  $J^{\alpha*}$. $\square$
\end{proof}

As a consequence we may call $p \in J^{\alpha*}$, $\alpha$-timelike or $\alpha$-lightlike depending on the corresponding character of $u \in J^\alpha$, or say that $p$ is $\alpha$-spacelike if $p\notin J^{\alpha*}$. The previous result then implies that $I^{\alpha*}$ so defined is  regular open since $I^\alpha$ is, and that $J^{\alpha*}=\overline{I^{\alpha*}}$.
Lemma \ref{hfk} leads to

\begin{proposition}
Let $\textrm{dim} V\ge 3$. The polar cones $J^{\alpha*}$  do not intersect.
\end{proposition}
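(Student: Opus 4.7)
The plan is to derive a contradiction by combining Lemma \ref{cch} (which says every element of a polar cone $J^{\alpha*}$ is realized as $\ell(u)$ for some $u\in J^\alpha$) with the global injectivity of the Legendre map established in Theorem \ref{inj} (which needs $\dim V\ge 3$).

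Concretely, suppose for contradiction that there exist distinct components $J^\alpha, J^\beta$ of $J$ and a covector $p \in J^{\alpha*}\cap J^{\beta*}$. By Lemma \ref{cch}, applied separately to each polar cone, we obtain vectors $u^\alpha\in J^\alpha$ and $u^\beta\in J^\beta$ (both nonzero, since $p\ne 0$ and $\ell(0)=0$) such that
\[
p=g_{u^\alpha}(u^\alpha,\cdot)=\ell(u^\alpha), \qquad p=g_{u^\beta}(u^\beta,\cdot)=\ell(u^\beta).
\]
Now Theorem \ref{inj} asserts that $\ell\colon V\setminus 0\to V^*\setminus 0$ is a bijection under the hypothesis $\dim V\ge 3$, so from $\ell(u^\alpha)=\ell(u^\beta)$ we conclude $u^\alpha=u^\beta$. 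But distinct components of $J$ are disjoint as equivalence classes of the connectedness relation, contradicting $u^\alpha\in J^\alpha$, $u^\beta\in J^\beta$ with $\alpha\ne\beta$.

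The only step that requires any care is the invocation of Lemma \ref{cch}: it was stated for $p\in J^{\alpha*}$ and yields $u\in J^\alpha$, so it is directly applicable to each of the two polar cones in turn. Everything else is a one-line appeal to already proved results, so this proof should be very short; the real content has already been done in setting up the injectivity of $\ell$ and the correspondence $J^\alpha\leftrightarrow J^{\alpha*}$. (A noteworthy alternative, which avoids Theorem \ref{inj} but uses Lemma \ref{hfk} instead, is to observe that if $p$ lies in both polar cones then $p\le 0$ on $J^\alpha\cup J^\beta$, hence on $\mathrm{Conv}(J^\alpha\cup J^\beta)=V$, forcing $p=0$; I would mention this as a remark but use the Legendre-map argument as the main proof since it reuses the machinery already developed in this subsection.)
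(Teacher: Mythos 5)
Your proof is correct and is essentially identical to the paper's: the paper also obtains $u\in J^\alpha$ and $v\in J^\beta$ with $\ell(u)=p=\ell(v)$ via Lemma \ref{cch} and then invokes the injectivity of the Legendre map together with $J^\alpha\cap J^\beta=\emptyset$ to reach a contradiction. Your remark about the alternative route through Lemma \ref{hfk} is a valid observation, but the main argument matches the paper's.
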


\begin{proof}
Indeed, if $p\in J^{\alpha*}\cap J^{\beta*}$ then there is an $\alpha$-causal vector such that $p=g_u(u,\cdot)$ and a  $\beta$-causal vector  such that $p=g_v(v,\cdot)$ in contradiction with $J^\alpha\cap J^\beta=\emptyset$ and the injectivity of the Legendre map.  $\square$
\end{proof}

\begin{proposition}
Let $\textrm{dim}\, V\ge 3$. No two distinct components of $J^*$ can lie on the
same half-space of $V^*$. Equivalently, the convex envelop of any two
components $J^{\alpha*},J^{\beta*}$ is the whole $V^*$:
$\textrm{Conv}(J^{\alpha*} \cup J^{\beta*})=V^*$.
\end{proposition}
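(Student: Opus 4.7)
The plan is to mirror the proof of Lemma \ref{hfk}, transported to the dual space $V^*$. The role previously played by the injectivity of the Legendre map will now be played by the bijection $\ell\colon J^\alpha \to J^{\alpha*}$ just established, together with Proposition \ref{ofa} which supplies the bipolar-type identification needed on the $V^*$ side.

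First I would dispense with the equivalence of the two formulations. This is a general fact about convex cones through the origin in a finite-dimensional vector space: if $\textrm{Conv}(J^{\alpha*} \cup J^{\beta*}) \neq V^*$, then since this convex hull is a proper convex cone containing the origin it lies in a closed half-space bounded by a hyperplane through the origin; conversely, if the two cones were contained in a common closed half-space through the origin their convex hull would be too. This is the same argument used at the start of the proof of Lemma \ref{hfk}, so I would not repeat it in detail.

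For the main claim I would argue by contradiction. Suppose $J^{\alpha*}$ and $J^{\beta*}$ both lie in a closed half-space $\{p \in V^* : p(v) \le 0\}$ for some $v \in V\setminus 0$, using the natural identification $V^{**} \simeq V$. By the corollary preceding the statement, every $p \in J^{\alpha*}$ has the form $p = g_u(u,\cdot)$ with $u \in J^\alpha$; hence $g_u(u,v) \le 0$ for every $u \in J^\alpha$, in particular for every $u \in E^\alpha$. Proposition \ref{ofa} then forces $v \in J^\alpha \cup \{0\}$. The same reasoning applied to $J^{\beta*}$ yields $v \in J^\beta \cup \{0\}$. Since $v \neq 0$, this produces $v \in J^\alpha \cap J^\beta$, contradicting the disjointness of distinct components of $J$.

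The only genuinely nontrivial ingredient, and the step I would flag as the main obstacle, is the bipolar-type implication ``$g_u(u,v) \le 0$ for all $u \in J^\alpha$ forces $v \in J^\alpha \cup \{0\}$''. Fortunately this is exactly the content of Proposition \ref{ofa}, so no machinery beyond what is already available in the paper is required, and the proof reduces to stringing together the identification $J^{\alpha*} = \ell(J^\alpha)$, Proposition \ref{ofa}, and the disjointness of the components of $J$.
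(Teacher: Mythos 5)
Your proof is correct and follows essentially the same route as the paper: a contradiction argument in which the common half-space condition on $J^{\alpha*},J^{\beta*}$ is pulled back through the Legendre map to the condition $g_u(u,v)\le 0$ for all $u\in J^\alpha$ (resp.\ $J^\beta$), which via Proposition \ref{ofa} places $v$ in both $J^\alpha$ and $J^\beta$, contradicting disjointness. The paper likewise handles the equivalence of the two formulations by referring back to the argument in Lemma \ref{hfk}.
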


\begin{proof}
Suppose by contradiction that $J^{\alpha*},J^{\beta*}$ lie in the same half-space of $V^*$. This means that there is $v\in V\backslash0$ such that for every $p\in J^{\alpha*}$, and every $q\in J^{\beta*}$, $p(v),q(v)\le 0$. Stated in another way, for every $u\in J^\alpha$, $g_u(u,v)\le 0$, which implies $v\in J^\alpha$ by Prop.\ \ref{ofa}. Arguing similarly, we obtain $v\in J^\beta$, a contradiction since $J^\alpha$ and $J^\beta$ do not intersect. The proof of the last claim is analogous to the last one of Lemma \ref{hfk}. $\square$
\end{proof}

\begin{proposition}
Let $\textrm{dim}\, V\ge 3$. Given any two distinct components
$I^{\alpha*}$ and $I^{\beta*}$  we have $I^{\alpha*} \cap (-I^{\beta*})\ne
\emptyset$.
\end{proposition}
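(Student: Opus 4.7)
The plan is to mimic the proof of Proposition \ref{ksg} verbatim, simply transferring it from $V$ to $V^*$ by using the preceding proposition (which established $\textrm{Conv}(J^{\alpha*}\cup J^{\beta*})=V^*$) in place of Lemma \ref{hfk}.

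First, I would upgrade the convex envelop identity from the closed cones to the open ones. Since the excerpt has already noted that $I^{\alpha*}$ is regular open (being the homeomorphic image of the regular open cone $I^\alpha$ under the Legendre map, and with $J^{\alpha*}=\overline{I^{\alpha*}}$), the same chain of inclusions used at the end of Lemma \ref{hfk} applies:
\[
V^*=\textrm{Conv}(J^{\alpha*}\cup J^{\beta*})=\textrm{Conv}(\overline{I^{\alpha*}\cup I^{\beta*}})\subset \overline{\textrm{Conv}(I^{\alpha*}\cup I^{\beta*})},
\]
and since $\textrm{Conv}(I^{\alpha*}\cup I^{\beta*})$ is open and convex hence regular open, it must equal $V^*$.

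Second, I would exploit this to find the intersection. Pick any $p\in I^{\beta*}$. By the equality just obtained, $-p$ can be written as $-p=aq_1+bq_2$ with $q_1\in I^{\alpha*}$, $q_2\in I^{\beta*}$, $a,b\ge 0$, $a+b=1$. The cone $I^{\beta*}$ is sharp (since $J^{\beta*}$ is sharp as recalled in the excerpt before Lemma \ref{cch}), so $I^{\beta*}\cap(-I^{\beta*})=\emptyset$; this forces $a>0$, for otherwise $-p=q_2\in I^{\beta*}\cap(-I^{\beta*})$. Consequently $aq_1\in I^{\alpha*}$ (since $I^{\alpha*}$ is a cone) and $aq_1=-p-bq_2\in -I^{\beta*}$ (since $-I^{\beta*}$ is convex and contains both $-p$ and $-q_2$, noting $b\ge 0$ and the sum of coefficients is $1$). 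Hence $aq_1\in I^{\alpha*}\cap(-I^{\beta*})$, which is therefore non-empty.

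There is no serious obstacle here, as the argument is structurally identical to Prop. \ref{ksg}; the only point requiring a moment of care is the passage from $\textrm{Conv}(J^{\alpha*}\cup J^{\beta*})=V^*$ to $\textrm{Conv}(I^{\alpha*}\cup I^{\beta*})=V^*$, for which the regular openness of $I^{\alpha*}$ (already recorded in the excerpt) is exactly what is needed.
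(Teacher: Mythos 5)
Your proof is correct and takes essentially the same route as the paper, which simply declares the argument ``analogous to that of Prop.~\ref{ksg}''; you have carried out that analogy in full, including the needed upgrade from $\textrm{Conv}(J^{\alpha*}\cup J^{\beta*})=V^*$ to $\textrm{Conv}(I^{\alpha*}\cup I^{\beta*})=V^*$ via regular openness (the paper also notes an alternative via hyperplane separation of $\hat{I}^\alpha$ and $\hat{I}^\beta$, which you do not need). The only cosmetic slip is the parenthetical justifying $aq_1=-p-bq_2\in -I^{\beta*}$: the coefficients $1$ and $b$ sum to $1+b$, not $1$, but the membership holds anyway because $-I^{\beta*}$ is a convex cone and hence closed under nonnegative combinations.
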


\begin{proof}
The proof is analogous to that of Prop.\ \ref{ksg}. Alternatively, this result is simply the statement that $I^\alpha$ and $I^\beta$ can be separated by a hyperplane, which is a consequence of the convexity of $\hat{I}^\alpha$ and $\hat{I}^\beta$. $\square$
\end{proof}

These results clarify that the polar cones satisfy properties  analogous to those satisfied by the original cones.

 Let
$\textrm{dim} V\ge 3$ and let us denote with $g_p$ the inverse of $g_v$ where $v$ is such that $\ell(v)=p$ and let
\begin{equation} \label{hem}
H(p)=\frac{1}{2}g_p(p,p),
\end{equation}
so that $H(p)=L(v)$.
This function is positive homogeneous of second degree in $p$ (since the matrix of $g_v$ and hence that of $g_p$ have zero degree), and its Hessian is
\begin{align*}
\frac{\p^2 H}{\p p_\alpha \p p_\beta}&=\frac{\p}{\p p_\alpha}[ \frac{\p H}{\p v^\gamma} \frac{\p v^\gamma}{\p p_\beta}]= \frac{\p}{\p p_\alpha}[ \frac{\p L}{\p v^\gamma} (\frac{\p p_\beta}{\p v^\gamma})^{-1}]=\frac{\p}{\p p_\alpha}[ p_\gamma (g_{v}^{-1})_{ \beta \gamma}]\\
&= (g_{v}^{-1})_{ \beta \alpha}+p_\gamma (\frac{\p v^\mu}{\p p_\alpha}) \frac{\p}{\p v^\mu} (g_{v}^{-1})_{ \beta \gamma}=g_p-p_\gamma (\frac{\p v^\mu}{\p p_\alpha}) \frac{\p}{\p v^\mu} (g_{v}^{-1})_{ \beta \gamma}=g_p,
\end{align*}
where in the last step we used Eq.\ (\ref{koi})
\[
p_\gamma \frac{\p}{\p v^\mu} (g_{v}^{-1})_{ \beta \gamma}=-  (g_{v}^{-1})_{ \beta \delta}  g_{v\, \delta \nu, \mu}  (g_{v}^{-1})_{ \nu \gamma} p_\gamma= -  (g_{v}^{-1})_{ \beta \delta}  g_{v\, \delta \nu, \mu} v^\nu=0.
\]
As $g_p$ has the same signature of $g_v$ we conclude that

\begin{proposition}
Let $\textrm{dim} V\ge 3$. Let $g_p$, $p=\ell(v)$, be the inverse of $g_v$ and let $H\colon V^* \to \mathbb{R}$ be defined through Eq.\ (\ref{hem}). Then  $(V^*, H)$ is a pseudo-Minkowski space, $g_p$ is the Hessian of $H$ and $\ell^*H=L$.
\end{proposition}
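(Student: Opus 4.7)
The plan is to assemble into a single statement the ingredients already prepared in the paragraph preceding the proposition. Since $\textrm{dim}\, V\ge 3$, Theorem \ref{inj} guarantees that $\ell\colon V\backslash 0\to V^*\backslash 0$ is a $C^{2,1}$-diffeomorphism, so one may unambiguously define $H\colon V^*\backslash 0\to\mathbb{R}$ by $H(p):=L(\ell^{-1}(p))$, extended by $H(0)=0$. From this definition $\ell^*H=L$ is automatic, and the identification $H(p)=\frac{1}{2} g_p(p,p)$ used in Eq.\ (\ref{hem}) follows from Eq.\ (\ref{koi})(i), which gives $p(v)=g_v(v,v)=2L(v)$ whenever $p=\ell(v)$, combined with the fact that $g_p$ is by definition the matrix inverse of $g_v$.

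First I would check positive homogeneity of degree two in $p$. By Eq.\ (\ref{koi})(i) and the zero-homogeneity of $g_v$ in $v$, one has $\ell(sv)=g_{sv}(sv,\cdot)=s\,g_v(v,\cdot)=s\,\ell(v)$ for every $s>0$, so $\ell^{-1}$ is likewise positively homogeneous of degree one, whence $H(sp)=L(s\ell^{-1}(p))=s^2 L(\ell^{-1}(p))=s^2 H(p)$.

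Next I would verify that $\p^2 H/\p p_\alpha\p p_\beta=g_p$. This is precisely the computation displayed just before the proposition statement, where the crucial cancellation uses Eq.\ (\ref{koi}) to kill the term $p_\gamma\,\p(g_v^{-1})_{\beta\gamma}/\p v^\mu$. Non-degeneracy of $g_p$ is automatic since it is the pointwise matrix inverse of the non-degenerate $g_v$, and its signature coincides with that of $g_v$ by Sylvester's law of inertia applied to inverses.

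It remains to verify the regularity required by Definition \ref{nod}, namely that $H$ is $C^{3,1}$. This follows because $g_v$ is $C^{1,1}$ on $V\backslash 0$ (Remark \ref{ret}), so its pointwise inverse $g_v^{-1}$ is $C^{1,1}$ in $v$; composing with the $C^{2,1}$ map $\ell^{-1}$ shows that $g_p$ is $C^{1,1}$ in $p$, and since the Hessian of $H$ is thereby $C^{1,1}$, $H$ itself is $C^{3,1}$. I expect the main (and quite mild) obstacle to be bookkeeping at the zero section: one must check that the continuous extension $H(0)=0$ is compatible with the homogeneity and Hessian statements, which is handled by the same Lipschitz-extension argument used for $L$ and $\ell$ in Remark \ref{ret}, since $g_p$ is bounded on the unit sphere of $V^*$.
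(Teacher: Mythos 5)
Your proposal is correct and follows essentially the same route as the paper: the paper's proof of this proposition is precisely the paragraph preceding it, which defines $g_p$ via the inverse of the Legendre map (Theorem \ref{inj}), notes $H(p)=L(v)$ and the zero-degree homogeneity of $g_p$, and carries out the Hessian computation whose key cancellation comes from Eq.\ (\ref{koi}). The only material you add is the explicit $C^{3,1}$ regularity check for $H$ via its $C^{1,1}$ Hessian, a point the paper leaves implicit; this is a harmless (and slightly more careful) supplement rather than a different argument.
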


The next result shows that $F^*:=\vert 2H\vert^{1/2}$ is a sort of norm dual to $F:=\vert 2L\vert^{1/2}$, once we restrict ourselves to the timelike domains.

\begin{proposition}
Let $\textrm{dim} V\ge 3$. We have
\[
 \forall p\in  I^{\alpha*}, \quad F^*(p)=\inf_{v\in I^\alpha:  F(v)=1} \vert p(v)\vert.
\]
Under reversibility we can also replace $I^{\alpha*}$ with $I^{*}$ and $I^\alpha$ with $I$.
\end{proposition}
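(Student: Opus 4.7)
The plan is to exploit the fact that the Legendre map $\ell$ restricts to a bijection $I^\alpha \to I^{\alpha*}$ (by the corollary after Lemma \ref{cch}) and to read the infimum as the equality case of the Finslerian reverse Cauchy--Schwarz inequality. Given $p\in I^{\alpha*}$, I would set $u:=\ell^{-1}(p)\in I^\alpha$, so $p=g_u(u,\cdot)$. Since $\ell^*H=L$ and $H$ is positively homogeneous of degree two, one has $H(p)=L(u)$, hence
\[
F^*(p)=\sqrt{-2H(p)}=\sqrt{-2L(u)}=F(u).
\]

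Now for any $v\in I^\alpha$ with $F(v)=1$, both $u,v\in J^\alpha$, so the Finslerian reverse Cauchy--Schwarz inequality gives
\[
-p(v)=-g_u(u,v)\ge \sqrt{-g_u(u,u)}\,\sqrt{-g_v(v,v)}=F(u)\cdot 1=F^*(p),
\]
with equality iff $v\propto u$. Since $p\in J^{\alpha*}$ we have $p(v)\le 0$, so $|p(v)|=-p(v)\ge F^*(p)$. Taking $v_0:=u/F(u)\in I^\alpha$, which is well defined because $u$ is timelike, gives $F(v_0)=1$ and $|p(v_0)|=-g_u(u,u/F(u))=F(u)=F^*(p)$, so the infimum is attained and equals $F^*(p)$.

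For the reversible case, the main theorem of the paper yields $I=I^\alpha\cup I^{-\alpha}$ with $I^{-\alpha}=-I^\alpha$, and reversibility gives $F(-v)=F(v)$ while $p(-v)=-p(v)$. Consequently
\[
\inf_{v\in I,\,F(v)=1}|p(v)|=\inf_{v\in I^\alpha,\,F(v)=1}|p(v)|=F^*(p),
\]
and the same identification $I^*=I^{\alpha*}\cup I^{-\alpha*}$ shows that the statement is meaningful after dropping the $\alpha$ labels. The only subtle point, and the one to state carefully, is that the equality case in the reverse Cauchy--Schwarz inequality forces $v\propto u$; everything else is a direct rewriting using $\ell^*H=L$ and the inner characterization $p=g_u(u,\cdot)$. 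No further obstacle is expected.
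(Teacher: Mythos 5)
Your proposal is correct and follows essentially the same route as the paper: identify $p=g_u(u,\cdot)$ via the Legendre bijection onto $I^{\alpha*}$, apply the Finslerian reverse Cauchy--Schwarz inequality to get the lower bound, attain it at $u/F(u)$, and in the reversible case use the two-component structure to reduce the infimum over $I$ to one over $I^\alpha$ via $F(-v)=F(v)$ and $p(-v)=-p(v)$. The paper's own proof is terser but identical in substance.
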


\begin{proof}
There is $w\in  I^\alpha$ such that $p=g_w(w,\cdot)$, thus the first claim is a trivial consequence of the Finslerian reverse Cauchy-Schwarz inequality.

Let us look more in detail at the proof of the strengthened version for reversible $L$.
There is $w\in I$ such that $p=g_w(w,\cdot)$. Let us call $I^\alpha$ the component of $I$ which contains $w$. Since there are just two components, if $2L(v)=-1$ we have $v\in J^\alpha$ or $-v\in J^\alpha$.
By the reverse Cauchy-Schwarz inequality applied to either $(w,v)$ or $(w,-v)$, we have $\vert p(v)\vert^2\ge \vert 2L(v)\vert \, \vert 2L(w) \vert= \vert 2H(p)\vert$, where in the last step we used $\ell^*H=L$. The infimum is attained, just take $v=w/{\vert 2L(w)\vert}^{1/2}$, so that $\vert p(v)\vert ={\vert 2L(w)\vert}^{1/2}= {\vert 2H(p)\vert}^{1/2}$. $\square$
\end{proof}

\subsection{Implications of Borsuk-Ulam's theorem for reversibility}

The Borsuk-Ulam theorem states that every continuous function from  $S^n$ into $R^n$ maps some pair of antipodal points to the same point.
Let $n\ge 2$. For positive $s$, the Legendre map satisfies $\ell(sv)=s\ell(v)$ thus it sends the homothety-quotient sphere $S^{n}$ obtained from $V\backslash0$ to the analogous sphere obtained from $V^*\backslash0$. Choose a vector $w\in V^\backslash0$ and let $\pi_w: V\to Q_w$ be the projection of $V$ onto the space $Q_w$ of equivalence classes of vectors in $V$, where any two vectors on the same equivalence class differ by a term proportional to $w$. We introduce a Euclidean scalar product on $V^*$, for instance the canonical one induced from a choice of affine coordinates, so as to identify the homothety-quotient sphere with the actual Euclidean sphere $S^*$.
The next result follows from the Borsuk-Ulam theorem applied to $\pi_w\circ \ell^{-1}\vert_{S^*}$.

\begin{proposition}
Let $\textrm{dim} V\ge 3$. For every $w\in V\backslash0$ we can find $v_1,v_2\in V\backslash0$, such that $g_{v_2}(v_2,\cdot)=-g_{v_1}(v_1,\cdot)$ and $v_2-v_1=2 w$.
\end{proposition}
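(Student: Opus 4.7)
The plan is to apply the Borsuk--Ulam theorem to the continuous map $\phi:=\pi_w\circ \ell^{-1}|_{S^*}$, exactly as hinted in the paragraph preceding the statement. Since $\dim V=n+1\ge 3$ forces $n\ge 2$, the Euclidean unit sphere $S^*\subset V^*$ is topologically $S^n$, while $Q_w$ is linearly isomorphic to $\mathbb{R}^n$. Thus $\phi$ carries $S^n$ into $\mathbb{R}^n$ and Borsuk--Ulam produces some $p\in S^*$ with $\phi(p)=\phi(-p)$.

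First I would invoke Theorem \ref{inj} to guarantee that $\ell^{-1}\colon V^*\setminus 0\to V\setminus 0$ is continuous (indeed a $C^{2,1}$-diffeomorphism), so that $\phi$ is continuous. The Borsuk--Ulam coincidence then translates into
\[
\ell^{-1}(p)-\ell^{-1}(-p)=\lambda w \qquad \text{for some } \lambda\in\mathbb{R}.
\]
Setting $\tilde v_2:=\ell^{-1}(p)$ and $\tilde v_1:=\ell^{-1}(-p)$, the identity $\ell(\tilde v_2)=p=-\ell(\tilde v_1)$ \emph{is} precisely the duality relation $g_{\tilde v_2}(\tilde v_2,\cdot)=-g_{\tilde v_1}(\tilde v_1,\cdot)$, so the requested dual condition comes for free.

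Next I would check $\lambda\ne 0$: otherwise $\tilde v_1=\tilde v_2$, and the injectivity of $\ell$ (Theorem \ref{inj}) would give $p=-p$, contradicting $p\in S^*$. Finally, exploiting the positive homogeneity $\ell(sv)=s\ell(v)$ for $s>0$ (which follows from the degree-zero homogeneity of $g_v$), I would rescale both vectors by the common positive factor $2/|\lambda|$; this preserves $\ell(v_2)=-\ell(v_1)$ while normalizing $v_2-v_1$ to $\pm 2w$. If the resulting difference happens to be $-2w$, one simply swaps the roles of $p$ and $-p$ (equivalently of $v_1$ and $v_2$) to end with $v_2-v_1=2w$.

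The only real, and essentially bookkeeping, obstacle is checking that the Euclidean sphere $S^*\subset V^*$ (defined via an arbitrary auxiliary Euclidean product) is a legitimate transversal to the positive-homothety action on $V^*\setminus 0$, so that the Euclidean antipodal map on $S^*$ coincides with the $\mathbb{Z}_2$-action sending the ray of $p$ to the ray of $-p$ on the homothety quotient. Once this identification is made explicit, no further Finslerian input is required beyond the global injectivity of the Legendre map, and the rest of the argument is purely elementary positive-rescaling and sign bookkeeping.
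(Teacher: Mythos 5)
Your proposal is correct and follows essentially the same route as the paper: apply Borsuk--Ulam to $\pi_w\circ\ell^{-1}\vert_{S^*}$, use the global injectivity of $\ell$ (Theorem \ref{inj}) to rule out the degenerate coincidence, and rescale by positive homogeneity, with the sign handled by swapping the antipodal pair. The only cosmetic difference is that the paper writes the Borsuk--Ulam output directly as $\ell^{-1}(p_2)-\ell^{-1}(p_1)=2\beta w$ and divides by $\beta$, whereas you normalize by $2/|\lambda|$ and then swap; both are the same bookkeeping.
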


\begin{proof}
By the  Borsuk-Ulam theorem applied to $\pi_w\circ \ell^{-1}\vert_{S^*}$ there are $p_1,p_2\in S^*$, $p_2=-p_1$, such that $\ell^{-1}(p_2)-\ell^{-1}(p_1)=2\beta w$, where $\beta$ is some constant. The constant can be chosen non-negative inverting the roles of $p_1$ and $p_2$, if necessary. Moreover, $\beta$ cannot vanish for otherwise $\ell^{-1}(-p_1)=\ell^{-1}(p_1)$ in contradiction with the injectivity of $\ell^{-1}$.  Defining $v_1=\ell^{-1}(p_1)/\beta $, $v_2=\ell^{-1}(p_2)/\beta$ gives the desired conclusion. $\square$
\end{proof}

Under reversibility this proposition is trivial, just take $v_2=-v_1=w$.
Thus the Borsuk-Ulam theorem leads us to a result which mitigates the lack of reversibility.

We can also take $q\in V^*\backslash0$ and consider the projection $\pi_q: V^*\to Q_q$, where two elements of $V^*$ belong to $Q_q$ if they differ by a term proportional to $q$. Let us introduce a Euclidean scalar product on $V$ so as to identify the homothety-quotient sphere with the Euclidean sphere $S$. We can then apply the Borsuk-Ulam theorem to $\pi_q\circ \ell\vert_S$.

\begin{proposition}
Let $\textrm{dim} V\ge 3$. The map $\phi\colon  V\backslash0 \to  V^*\backslash0$, defined by $\phi(v)=[g_{v}(v,\cdot)+g_{-v}(v,\cdot)]/2$ is surjective.
\end{proposition}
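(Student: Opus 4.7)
The plan is to follow the strategy indicated in the paragraph immediately preceding the statement and apply the Borsuk-Ulam theorem to the composition $\pi_q \circ \ell|_S$. First I would rewrite $\phi$ in a more transparent form using bilinearity:
\[
\phi(v) = \tfrac{1}{2}\bigl[\,\ell(v) - \ell(-v)\,\bigr],
\]
which follows from $g_{-v}(v,\cdot) = -g_{-v}(-v,\cdot) = -\ell(-v)$. Combined with the zero-degree positive homogeneity of $g_v$ on the fibres, this shows at once that $\phi(sv) = s\,\phi(v)$ for every $s\in \mathbb{R}\setminus 0$, so $\phi$ is odd and (genuinely) homogeneous of degree one.

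Now fix an arbitrary $q \in V^*\setminus 0$. As in the preceding paragraph, identify the homothety quotient of $V\setminus 0$ with the Euclidean unit sphere $S \cong S^n$, and form the continuous map
\[
\pi_q \circ \ell|_S \colon S^n \longrightarrow Q_q \cong \mathbb{R}^n,
\]
where $Q_q = V^*/\langle q\rangle$ has dimension $n \geq 2$. The Borsuk-Ulam theorem produces some $v\in S$ with $\pi_q(\ell(v)) = \pi_q(\ell(-v))$, which is to say
\[
\ell(v) - \ell(-v) = \lambda\, q \quad \text{for some } \lambda \in \mathbb{R}.
\]

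The crucial step is to rule out $\lambda = 0$: if $\lambda$ vanished then $\ell(v) = \ell(-v)$, contradicting the global injectivity of the Legendre map (Theorem \ref{inj}), which is exactly where the hypothesis $\dim V \geq 3$ is used. Replacing $v$ by $-v$ if necessary (an operation that flips the sign of $\lambda$), I may assume $\lambda > 0$. Then $2\phi(v) = \lambda\, q$, and the homogeneity of $\phi$ yields $\phi\bigl(\tfrac{2}{\lambda}\,v\bigr) = q$, establishing surjectivity. The only substantial ingredient is the correct Borsuk-Ulam setup; once that is in place, injectivity of $\ell$ disposes of the degenerate case without further work.
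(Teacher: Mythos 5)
Your proof is correct and follows essentially the same route as the paper: apply Borsuk--Ulam to $\pi_q\circ\ell\vert_S$, use the injectivity of $\ell$ (Theorem \ref{inj}, which needs $\dim V\ge 3$) to exclude the degenerate antipodal coincidence, and rescale using the degree-one homogeneity of $\phi$. The preliminary identity $\phi(v)=\tfrac{1}{2}[\ell(v)-\ell(-v)]$ is a nice clarifying touch that the paper leaves implicit, but the argument is the same.
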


\begin{proof}
Let $q\in V^*\backslash0$, by the Borsuk-Ulam theorem there are $v_1,v_2\in S$, $v_2=-v_1$ such that $\ell(v_1)-\ell(v_2)=2\beta q$, or equivalently $g_{v_1}(v_1,\cdot)-g_{v_2}(v_2,\cdot)=2\beta q$.  Here $\beta$ is some constant which can be chosen non-negative reversing the roles of $v_1$ and $v_2$, if necessary. Actually, $\beta$ does not vanish, for otherwise $\ell(v_1)=\ell(-v_1)$ in contradiction with the injectivity of $\ell$.  Defining $v=v_1/\beta$ gives the desired conclusion. $\square$
\end{proof}

Once again, under reversibility this claim is trivial, as $\ell$ is surjective. The next result is also interesting.

\begin{proposition}
There is $v\in V \backslash 0$ such that $\ell(-v)=-s \ell(v)$ where $s>0$ is some constant.
\end{proposition}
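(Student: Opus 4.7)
The plan is to recast the statement as a coincidence problem for two homeomorphisms of the topological $n$-sphere, in the spirit of the Borsuk--Ulam arguments used in the preceding propositions. Fix auxiliary Euclidean inner products on $V$ and $V^*$, let $S\subset V$ and $S^*\subset V^*$ be the corresponding unit spheres, and introduce the normalized Legendre map $\tilde\ell\colon S\to S^*$, $\tilde\ell(v):=\ell(v)/\|\ell(v)\|_*$. By positive homogeneity of $\ell$ and Theorem~\ref{inj}, $\tilde\ell$ is a homeomorphism of $n$-spheres. Once we find $v\in S$ with $\tilde\ell(-v)=-\tilde\ell(v)$, the claim follows with $s=\|\ell(-v)\|_*/\|\ell(v)\|_*>0$.

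Introduce the auxiliary homeomorphism $\Phi\colon S\to S^*$ defined by $\Phi(v):=-\tilde\ell(-v)$; under reversibility $\Phi=\tilde\ell$. The desired conclusion is then a coincidence $\Phi(v)=\tilde\ell(v)$. The key preliminary observation is that the opposite equality $\Phi(v)=-\tilde\ell(v)$ can never occur, for it would give $\tilde\ell(-v)=\tilde\ell(v)$, whence $v=-v$ by injectivity of $\tilde\ell$, contradicting $v\in S$. Consequently $\tilde\ell(v)$ and $\Phi(v)$ are never antipodal on $S^*$, and the straight-line homotopy between them in $V^*\setminus 0$, normalized to $S^*$, yields $\Phi\simeq\tilde\ell$ as maps $S^n\to S^n$.

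Assume for contradiction that no coincidence exists, i.e.\ $\Phi(v)\ne\tilde\ell(v)$ for every $v\in S$. Then $\Phi(v)$ and $-\tilde\ell(v)$ are never antipodal either, so a second straight-line homotopy yields $\Phi\simeq -\tilde\ell$. Combining, $\tilde\ell\simeq -\tilde\ell$, whence $\deg(\tilde\ell)=(-1)^{n+1}\deg(\tilde\ell)$; since $\tilde\ell$ is a homeomorphism $\deg(\tilde\ell)=\pm 1$, so $(-1)^{n+1}=1$, i.e.\ $n$ is odd. For $n$ even (odd-dimensional $V$) this is a contradiction and the desired $v$ exists.

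The main obstacle is thus the remaining case $n$ odd (even-dimensional $V$), in which degree cannot distinguish $\tilde\ell$ from $-\tilde\ell$ and the homotopy argument becomes vacuous. I expect this case to be handled by exploiting additional structure of $\ell$ as the gradient of the real function $L$, or by iterating the Borsuk--Ulam arguments of the two preceding propositions applied to suitable projections of $\ell(v)\pm\ell(-v)$, so as to pin down a $v$ where these two terms are collinear with the sign required for antiparallelism of $\ell(v)$ and $\ell(-v)$.
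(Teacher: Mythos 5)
Your reduction is, in substance, identical to the paper's: the paper considers the map $w\mapsto -\ell^{-1}(-\ell(w))$, notes that it descends to a homeomorphism of the homothety-quotient sphere $S^{n}$, and looks for a fixed point of that homeomorphism, which is precisely your coincidence $\Phi(v)=\tilde\ell(v)$. Your homotopy/degree computation is correct as far as it goes and settles the statement for $n$ even. The genuine gap is that for $n$ odd (even-dimensional $V$, which includes the physically central case $\dim V=4$) you prove nothing: the self-map in question has degree $+1$, so its Lefschetz number is $1+(-1)^{n}$, which vanishes exactly when $n$ is odd; the degree obstruction is therefore genuinely empty there, and your closing paragraph is a plan rather than an argument. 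As written, the proposal establishes the proposition only for half the dimensions.

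You should be aware, however, that the paper's own justification is no stronger. It concludes ``thus it has a fixed point'' from the bare fact that the map is a homeomorphism of $S^{n}$, which is false as a general principle: the antipodal map is a fixed-point-free homeomorphism of every sphere, and odd-dimensional spheres carry fixed-point-free homeomorphisms of degree $+1$ (rotations), which is exactly the homotopy class to which $w\mapsto-\ell^{-1}(-\ell(w))$ belongs. So the paper's one-line step is implicitly the same Lefschetz/degree argument you made explicit, and it too is conclusive only for $n$ even. Closing the odd-$n$ case would require input beyond degree theory --- for instance the fact that $\ell$ and $\ell^{-1}$ are gradients of $L$ and $H$, or the cone structure established earlier --- and no such input appears in either your proposal or the paper's proof.
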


\begin{proof}
Let us consider the map $w\mapsto -\ell^{-1}(-\ell(w))$. Since it is positive homogeneous of degree 1 it establishes an homeomorphism between the homothety-quotient $S^{n}$ sphere and itself. Thus it has a fixed point $v$,  that is $-\ell^{-1}(-\ell(v))=k v$ with $k>0$, or $k \ell(-v)=-\ell(v)$. $\square$
\end{proof}

\begin{remark}
The results of this section hold with $(V,L)$ general pseudo-Minkowski space
because in the proof we did not use the signature of $g_v$ (see remark \ref{mdl}).
\end{remark}

\section{Conclusions}

Many authors proposed  Finslerian generalizations of Einstein's general relativity (see  \cite{takano74,ishikawa80,asanov85,rutz93,miron92,chang08,vacaru10,pfeifer11,pfeifer12} and their bibliographies). Unfortunately, there is no consensus on the set  of equations as different authors gave different generalizations. In spite of this, Finslerian extensions of Einstein's relativity theory have been considerably popular and found many applications, see \cite{gibbons07,chang08,kouretsis09,li10,chang13}, just to mention a few recent works.
Often,
the dynamical equations are obtained through a tensorial analogy with Einstein's equations, but so far no proof has been given that the new set of equations implies a conservation of the stress-energy tensor or   the geodesic principle (weak equivalence principle) as in general relativity, not to mention the symmetry of the stress energy tensor. Thus, the tensorial equations have non-transparent physical content.  Given this situation it seems advisable to take a step back, avoid the dynamical equations, e.g.\ as in \cite{lammerzahl12}, and try to understand the physics behind the formalism of Finsler geometry. This work goes in this direction trying to elucidate the local causal structure of these theories. Concerning the issue of local causality there have been two approaches.

In Asanov's approach one restricts the Finsler Lagrangian to a conic subbundle of the tangent space. In this way one essentially selects by hand the physically meaningful future timelike cone. The causal structure at a point becomes  trivially the desired one, but there seems to be a price to be paid. The Finsler Lagrangian can often be continued beyond the light cone, but the extension is ignored and discarded as unphysical. Moreover, the description of the domain of the Finsler Lagrangian might require the introduction of additional fields, and finally, local spacelike geodesics might not make sense at all, a fact which might cause several interpretational problems in connection with ideal measurements of length.

In Beem's approach the Finsler Lagrangian is well defined over the whole slit tangent bundle. Here it makes sense to consider the problem of the number of light cones predicted by the theory.
 In the early seventies Beem  showed that at each point the Finsler metric can determine arbitrarily many light cones on the tangent space. This result suggested that Finsler geometry could be geometrically unsuitable for developing generalizations of general relativity. Indeed,
 since then no progress was  made in Beem's direction.

In this work we have established that, actually, every reversible Finsler spacetime has the usual light cone structure at any point provided the spacetime dimension is larger than two. That is, at each point we have just two strictly convex sharp  causal cones intersecting  at the origin whose linear span is the whole tangent space.

This result proves that the pathologies of the two dimensional case do not show up in more dimensions, and clarifies that Finsler spacetimes are indeed geometrically interesting generalizations of Lorentzian spacetimes with possibly deep physical implications.  In fact, in this theory there is no need to restrict the Finsler Lagrangian to some cone subbundle,  nor to introduce additional dynamical fields to describe those subbundles.
Thus, this work on local causality joins and confirms  this author's conclusions  \cite{minguzzi13d}
that most results of global causality theory extend from the Lorentzian to the Finslerian domain, provided the local theory leaves us with one future cone (either selected `by hand' as in Asanov's approach or deduced from our theorem as in Beem's approach).
As recalled in a recent paper  \cite{javaloyes13} another approach to global causality might follow  Fathi and Siconolfi \cite{fathi12} work on cones structures.

Finally, Beem's approach has a considerable advantage over Asanov's. Namely, the family of allowed Finsler Lagrangians is much smaller, so it can provide strong hints at the dynamics helping to select the correct generalization of Einstein's equations.

In a next work we shall consider the dynamical equations and we shall provide further results for non-reversible metrics.

\section*{Acknowledgments} I thank Marco Spadini and an anonymous referee for their critical reading of the manuscript. I thank Christian Pfeifer for pointing out a mistake in a previous version of Example \ref{exe}, and for checking the new version. Work partially supported by GNFM of INDAM.


\begin{thebibliography}{10}

\bibitem{akbarzadeh88}
Akbar-Zadeh, H.: Sur les espaces de {F}insler a courbures sectionnelles
  constantes.
\newblock Acad. Roy. Belg. Bull. Cl. Sci. \textbf{74}, 281--322 (1988)

\bibitem{asanov85}
Asanov, G.~S.: \emph{Finsler geometry, relativity and gauge theories}.
\newblock Dordrecht: D. Reidel Publishing Co. (1985)

\bibitem{bao00}
Bao, D., Chern, S.-S., and Shen, Z.: \emph{An Introduction to
  {R}iemann-{F}insler Geometry}.
\newblock New York: {Springer-Verlag} (2000)

\bibitem{barletta12}
Barletta, E. and Dragomir, S.: Gravity as a {F}inslerian metric phenomenon.
\newblock Found. Phys. \textbf{42}, 436--453 (2012)

\bibitem{beem70}
Beem, J.~K.: Indefinite {F}insler spaces and timelike spaces.
\newblock Can. J. Math. \textbf{22}, 1035--1039 (1970)

\bibitem{beem76b}
Beem, J.~K.: Characterizing {F}insler spaces which are pseudo-{R}iemannian of
  constant curvature.
\newblock Pacific {J}. {M}ath. \textbf{64}, 67--77 (1976)

\bibitem{burago74}
Burago, Y.~D. and Zalgaller, V.~A.: Sufficient criteria for convexity.
\newblock AN SSSR \textbf{45}, 3--52 (1974)

\bibitem{chang13}
Chang, Z., Li, M.-H., and Wang, S.: Finsler geometric perspective on the bulk
  flow in the universe.
\newblock Physics Letters B \textbf{723}, 257--260 (2013)

\bibitem{chang08}
Chang, Z. and Li, X.: Modified {N}ewton's gravity in {F}insler space as a
  possible alternative to dark matter hypothesis.
\newblock Physics Letters B \textbf{668}, 453--456 (2008)

\bibitem{chern05}
Chern, S.~S. and Shen, Z.: \emph{Riemann-{F}insler geometry}, vol.~6 of
  \emph{Nankai tracts in mathematics}.
\newblock New Jersey: World {S}cientific (2005)

\bibitem{dubrovin85}
Dubrovin, B.~A., Novikov, S.~P., and Fomenko, A.~T.: \emph{Modern
  geometry-methods and applications, Part {II}: {T}he geometry and topology of
  manifolds}.
\newblock New York: {Springer-Verlag} (1985)

\bibitem{fathi12}
Fathi, A. and Siconolfi, A.: On smooth time functions.
\newblock Math. {P}roc. {C}amb. {P}hil. {S}oc. \textbf{152}, 303--339 (2012)

\bibitem{gibbons07}
Gibbons, G.~W., Gomis, J., and Pope, C.~N.: General {V}ery {S}pecial
  {R}elativity is {F}insler geometry.
\newblock Phys. Rev. D \textbf{76}, 081701 (2007)

\bibitem{hawking73}
Hawking, S.~W. and Ellis, G. F.~R.: \emph{The Large Scale Structure of
  Space-Time}.
\newblock Cambridge: Cambridge {U}niversity {P}ress (1973)

\bibitem{ishikawa80}
Ishikawa, H.: Einstein equation in lifted {F}insler spaces.
\newblock Il Nuovo Cimento \textbf{56}, 252--262 (1980)

\bibitem{ishikawa81}
Ishikawa, H.: Note on {F}inslerian relativity.
\newblock J. Math. Phys. \textbf{22}, 995--1004 (1981)

\bibitem{javaloyes13}
Javaloyes, M.~A. and S\'anchez, M.: Finsler metrics and relativistic
  spacetimes.
\newblock   Int. J. Geom. Meth. Mod. Phys.
\textbf{11}, 1460032 (2014)
\newblock Special Issue for the XXII IFWGP Evora. ArXiv:1311.4770v1

\bibitem{kostelecky11}
Kostelecky, V.~A.: Riemann-{F}insler geometry and {L}orentz-violating
  kinematics.
\newblock Phys. Lett. B \textbf{701}, 137--143 (2011)

\bibitem{kouretsis09}
Kouretsis, A.~P., Stathakopoulos, M., and Stavrinos, P.~C.: General very
  special relativity in {F}insler cosmology.
\newblock Phys. Rev. D \textbf{79}, 104011 (2009)

\bibitem{lammerzahl12}
L{\"a}mmerzahl, C., Perlick, V., and Hasse, W.: Observable effects in a class
  of spherically symmetric static {F}insler spacetimes.
\newblock Phys. Rev. D \textbf{86}, 104042 (2012)

\bibitem{lee03}
Lee, J.~M.: \emph{Introduction to smooth manifolds}.
\newblock New York: {Springer-Verlag} (2003)

\bibitem{li10}
Li, X. and Chang, Z.: A possible scenario of the {P}ioneer anomaly in the
  framework of {F}insler geometry.
\newblock Physics Letters B \textbf{692}, 1--3 (2010)

\bibitem{li14}
Li, X. and Chang, Z.: Exact solution of vacuum field equation in {F}insler
  spacetime (2014).
  \newblock Phys. Rev. D \textbf{90}, 064049
(2014)
\newblock ArXiv:1401.6363v1

\bibitem{matsumoto86}
Matsumoto, M.: \emph{Foundations of Finsler Geometry and special Finsler
  Spaces}.
\newblock Tokio: Kaseisha Press (1986)

\bibitem{minguzzi13d}
Minguzzi, E.: Convex neighborhoods for {L}ipschitz connections and sprays. (2013) DOI: 10.1007/s00605-014-0699-y.
\newblock ArXiv:1308.6675

\bibitem{miron92}
Miron, R., Rosca, R., Anastasiei, M., and Buchner, K.: New aspects of
  {L}agrangian relativity.
\newblock Found. Phys. Lett. \textbf{5}, 141--171 (1992)

\bibitem{perlick06}
Perlick, V.: Fermat principle in {F}insler spacetimes.
\newblock Gen. Relativ. Gravit. \textbf{38}, 365--380 (2006)

\bibitem{pfeifer11}
Pfeifer, C. and Wohlfarth, M. N.~R.: Causal structure and electrodynamics on
  {F}insler spacetimes.
\newblock Phys. Rev. D \textbf{84}, 044039 (2011)

\bibitem{pfeifer12}
Pfeifer, C. and Wohlfarth, M. N.~R.: Finsler geometric extension of Einstein
  gravity.
\newblock Phys. Rev. D \textbf{85}, 064009 (2012)

\bibitem{rutz93}
Rutz, S.~F.: A {F}insler generalisation of {E}instein's vacuum field equations.
\newblock Gen. Relativ. Gravit. \textbf{25}, 1139--1158 (1993)

\bibitem{skakala09}
Sk\'akala, J. and Visser, M.: Birefringence in pseudo–-{F}insler spacetimes.
\newblock Journal of Physics: Conference Series \textbf{189}, 012037 (2009)

\bibitem{szilasi14}
Szilasi, J., Lovas, R.~L., and Kertesz, D.~C.: \emph{Connections, sprays and
  Finsler structures}.
\newblock London: World {S}cientific (2014)

\bibitem{takano74}
Takano, Y.: Gravitational field in {F}insler spaces.
\newblock Lettere al Nuovo Cimento \textbf{10}, 747--750 (1974)

\bibitem{gallego12}
Torrom\'e, R.~G., Piccione, P., and Vit\'orio, H.: On {F}ermat's principle for
  causal curves in time oriented {F}insler spacetimes.
\newblock J. Math. Phys. \textbf{53}, 123511 (2012)

\bibitem{trautman84}
Trautman, A.: \emph{Differential Geometry For Physicists}.
\newblock Stony Brook Lectures. Napoli: Bibliopolis (1984)

\bibitem{vacaru10}
Vacaru, S.~I.: Critical remarks on {F}insler modifications of gravity and
  cosmology by {Z}he {C}hang and {X}in {L}i.
\newblock Phys. Lett. B \textbf{690}, 224--228 (2010)

\bibitem{willard70}
Willard, S.: \emph{General topology}.
\newblock Reading: {Addison-Wesley} {P}ublishing {C}ompany (1970)

\end{thebibliography}

\end{document}